\def\marginnote#1{}
        \def\theequation{\thesection.\arabic{equation}}
\newcommand{\tr}{{\rm tr}}
\newcommand{\ti}[1]{\tilde{#1}}
\newcommand{\mL}{{\mathcal L}}
\newcommand{\mM}{{\mathcal M}}
\newcommand{\mH}{{\mathcal H}}
\newcommand{\mO}{{\mathcal O}}
\newcommand{\vf}{\varphi}
\newcommand{\al}{\alpha}
\newcommand{\be}{\beta}
\newcommand{\ga}{\gamma}
\newcommand{\om}{\omega}
\newcommand{\vth}{\vartheta}
\newcommand{\de}{\delta}
\newcommand{\Mat}{ {\rm Mat}(N,\mathbb C) }
\newcommand{\MatNM}{ {\rm Mat}(NM,\mathbb C) }
\newcommand{\mC}{\mathbb C}
\newcommand{\mZ}{\mathbb Z}
\newcommand{\mS}{\mathcal S}
\newcommand{\ka}{\kappa}
\newcommand{\z}{{\zeta}}
\newtheorem{theorem}{Theorem}
\newenvironment{proof}{\par\noindent{\bf Proof.}}{\hfill$\scriptstyle\blacksquare$}
\def\beq{\begin{equation}}
\def\eq{\end{equation}}
\def\res{\mathop{\hbox{Res}}\limits}
\begin{document}

\setcounter{page}{1}

\

\vspace{-15mm}

\begin{flushright}
\end{flushright}
\vspace{0mm}

\begin{center}
\vspace{-8mm}
{\LARGE{ Lax equations for relativistic ${\rm GL}(NM,{\mathbb C})$}}
 \\ \vspace{4mm}
{\LARGE{ Gaudin models on elliptic curve}}

 \vspace{16mm}

 {\Large  {E. Trunina}\,\footnote{Steklov Mathematical Institute of Russian
Academy of Sciences, Gubkina str. 8, 119991, Moscow, Russia;
\\
 Moscow
 Institute of Physics and Technology, Inststitutskii per.  9,
 Dolgoprudny, Moscow region, 141700, Russia;
 e-mail:
 yelizaveta.kupcheva@phystech.edu.}
 \quad\quad\quad
 {A. Zotov}\,\footnote{Steklov Mathematical Institute of Russian
Academy of Sciences, Gubkina str. 8, 119991, Moscow, Russia;
 e-mail: zotov@mi-ras.ru.}
 }
\end{center}

\vspace{5mm}

\begin{abstract}
 We describe the most general ${\rm GL}_{NM}$ classical elliptic finite-dimensional integrable system, which
 Lax matrix has $n$ simple poles on elliptic curve. For $M=1$ it reproduces the classical
  inhomogeneous spin chain, for $N=1$ it is the Gaudin type (multispin) extension of the spin Ruijsenaars-Schneider model,
   and for $n=1$ the model of $M$ interacting relativistic ${\rm GL}_N$ tops emerges in some particular case.
   In this way we present a classification for relativistic Gaudin models on ${\rm GL}$-bundles over elliptic curve.
  As a by-product we describe the inhomogeneous Ruijsenaars chain. We show that this model can be considered as a particular case of multispin Ruijsenaars-Schneider model when residues of the Lax matrix are of rank one. An explicit parametrization of the classical spin variables through the canonical variables is obtained for this model.
  Finally, the most general ${\rm GL}_{NM}$ model is also described through $R$-matrices satisfying associative Yang-Baxter equation. This description provides the trigonometric and rational analogues of ${\rm GL}_{NM}$ models.
\end{abstract}




\small{
\tableofcontents
}
\parskip 5pt plus 1pt   \jot = 1.5ex


\section{Introduction: classification scheme}
\setcounter{equation}{0}

In our previous paper \cite{TZ} we reviewed the non-relativistic classical integrable systems
on elliptic curve. The classification scheme for these model is as follows:
  $$
   \begin{array}{c}
   \hbox{\underline{Classification scheme for elliptic non-relativistic models:}}
   \\ \ \\
   \begin{array}{ccc}
      & \fbox{ $\phantom{\Big(}$\quad 1. general ${\rm gl}_{NM}^{\times n}$ model\quad $\phantom{\Big(}$} &
   \\
  \hfill$\qquad\qquad\qquad\ \phantom{\Big(}$ \hbox{\footnotesize{$M=1$}}\swarrow &  \Big| & \searrow
  \hbox{\footnotesize{$N=1$}}\quad $\phantom{\Big(}$\qquad\qquad\qquad\qquad\
      \\
    \hfill\fbox{$\phantom{\Big(}$ 2. ${\rm gl}_N^{\times n}$ Gaudin model$\phantom{\Big(}$} &  \stackrel{ \hbox{\footnotesize{$n=1$}} }{\downarrow} &
    \fbox{$\phantom{\Big(}$ 3. ${\rm gl}_M^{\times n}$ multispin CM}\hfill
         \\
   \Big| & \fbox{$\phantom{\Big(}$ 4. ${\rm gl}_{NM}$ mixed type model$\phantom{\Big(}$} & \Big|\hfill
   \\
  $\qquad\qquad\ \ \phantom{\Big(}$ \stackrel{ \hbox{\footnotesize{$n=1$}} }{\downarrow}\ \, \hfill\hbox{\footnotesize{$M=1$}}\swarrow &  \Big| & \searrow
  \hbox{\footnotesize{$N=1$}}\quad\ \  $\phantom{\Big(}$\stackrel{ \hbox{\footnotesize{$n=1$}} }{\downarrow}
  $\qquad\qquad\quad\ \phantom{\Big(}$\
      \\
    \hfill\fbox{$\phantom{\Big(}$ 5. ${\rm gl}_N$ integrable top\ $\phantom{\Big(}$} &  $\phantom{\Big(}$
    \stackrel{ \hbox{\footnotesize{$rk(S)=1$}} }{\downarrow} $\phantom{\Big(}$ &
    \fbox{$\phantom{\Big(}$ 6.  ${\rm gl}_M$ spin CM\qquad $\phantom{\Big(}$}\hfill
         \\
   \ \Big| &  \fbox{$\phantom{\Big(}$ 7.  $M$ interacting ${\rm gl}_N$ tops $\phantom{\Big(}$}
  &
  \ \Big|
            \\
 $\qquad\qquad\ \phantom{\Big(}$\hfill\stackrel{ \hbox{\footnotesize{$rk(S)=1$}} }{\downarrow}\ \hbox{\footnotesize{$M=1$}}  \swarrow &
  &
   \searrow
  \hbox{\footnotesize{$N=1$}}\quad\ \stackrel{ \hbox{\footnotesize{$rk(S)=1$}}}{\downarrow}$\qquad\qquad\ \phantom{\Big(}$
            \\
    \hfill\fbox{$\phantom{\Big(}$ 8. ${\rm gl}_N$ top on $\mO_N^{\hbox{\tiny{min}}}$ \quad $\phantom{\Big(}$} & &
    \fbox{$\phantom{\Big(}$ 9. ${\rm gl}_M$ spinless CM $\phantom{\Big(}$}
    \\ \ \\
    \hfill\hbox{family II}\qquad\qquad   &  \hbox{family III} & \qquad  \hbox{family I}\qquad\qquad
    \\ \ \\
       & \hbox{\bf Scheme 1} &
   \end{array}
      \end{array}
   $$
%
%
%
%
%
%
%
\paragraph{Non-relativistic models.}  Let us briefly recall the main idea. The lowest level is given by the elliptic spinless Calogero-Moser (CM) model
and the elliptic top with minimal coadjoint orbit. These are the boxes 9 and 8 on the Scheme 1 respectively.
Within the first family the Calogero-Moser model is extended to its spin generalization (box 6) and the
Nekrasov's multispin model 3 of Gaudin type. Similarly, in the second family the elliptic top with minimal coadjoint orbit is extended
to the one with arbitrary orbit (box 5) and to the elliptic Gaudin model (box 2). Hereinafter by the models of Gaudin
type we mean those models, which are described by the Lax matrices with a set of simple poles in spectral parameter $z$ at some points $z_1,...,z_n$ on elliptic curve (or its degeneration). In the classical spin Calogero-Moser model the spin variables  are arranged into the ''spin matrix'' $S$, which is a residue of the Lax matrix at a single pole. In the Gaudin models there are $n$ spin matrices -- residues at $z_1,...,z_n$. For this reason the Gaudin type models are also called as multi-pole or multispin models. Below we keep this terminology for the relativistic models.

The models from the second family are governed by the classical non-dynamical $r$-matrices of vertex type, while the systems from the first family
are described by dynamical (classical analogues of IRF type) $r$-matrices. According to classification of classical elliptic integrable systems \cite{LOSZ} there are also intermediate ${\rm gl}_{NM}$ models of mixed type. They are presented in
the third family.  When $N=1$ the first family is reproduced, and the second family appears in the case $M=1$.
The models from different families are related by the so-called symplectic Hecke correspondence \cite{LOZ}. In particular, it means that the systems 8 and 9 are gauge equivalent at the level of Lax pairs, and explicit change of variables can be evaluated.

The upper level of the Scheme 1 (i.e. the cases 1, 2, 3) is given by the Gaudin type models. In these cases the Lax matrices have $n$ simple poles. When $n=1$ these models turn into the middle level (i.e. the cases 4, 5, 6). And the lower level (i.e. the cases 7, 8, 9) comes from the middle one by restricting to the coadjoint orbits
of minimal dimensions for the spin variables. The spin variables are elements of the spin matrix $S$, and the condition ${\rm rk}(S)=1$ is equivalent to the choice of minimal coadjoint orbit.

\paragraph{Relativistic models.} In this paper we discuss relativistic analogue of the above scheme. The classification is presented on the Scheme 2.
%
%
  $$
   \begin{array}{c}
   \hbox{\underline{Classification scheme for elliptic relativistic models:}}
   \\ \ \\
   \begin{array}{ccc}
      & \fbox{ $\phantom{\Big(}$ 1. general ${\rm GL}_{NM}^{\times n}$ model $\phantom{\Big(}$} &
   \\
  \hfill$\qquad\qquad\qquad\phantom{\Big(}$ \hbox{\footnotesize{$M=1$}}\swarrow &  \Big| & \searrow
  \hbox{\footnotesize{$N=1$}} $\phantom{\Big(}$\qquad\qquad\qquad\qquad\
      \\
    \hfill\fbox{$\phantom{\Big|}$ 2. ${\rm GL}_N^{\times n}$ XYZ chain\quad $\phantom{\Big|}$} &  \stackrel{ \hbox{\footnotesize{$n=1$}} }{\downarrow} &
    \fbox{ 3. ${\rm GL}_M^{\times n}$ multispin RS\phantom{\Big|}}\hfill
         \\
   \Big| & \fbox{$\phantom{\Big|}$ 4. ${\rm GL}_{NM}$ mixed type model$\phantom{\Big|}$} & $\phantom{ \hbox{\footnotesize{$n=1$}} }$\Big|\qquad\
   \\
  $\qquad\qquad\phantom{\Big(}$ \stackrel{ \hbox{\footnotesize{$n=1$}} }{\downarrow} \hfill\hbox{\footnotesize{$M=1$}}\swarrow &  \Big| & \searrow
  \hbox{\footnotesize{$N=1$}}\quad  $\phantom{\Big(}$\stackrel{ \hbox{\footnotesize{$n=1$}} }{\downarrow}
  \qquad\qquad\ \
      \\
    \hfill\fbox{$\phantom{\Big(}$ 5. ${\rm GL}_N$ relativ. top\ $\phantom{\Big(}$} &  $\phantom{\Big(}$
    \stackrel{ \hbox{\footnotesize{$rk(S)=1$}} }{\downarrow} $\phantom{\Big(}$ &
    \fbox{$\phantom{\Big(}$ 6.  ${\rm GL}_M$ spin RS\qquad $\phantom{\Big(}$}\hfill
         \\
   \Big|  &  \fbox{$\phantom{\Big(}$ 7.  $M$ interacting ${\rm GL}_N$ tops $\phantom{\Big(}$}
  &
   \Big|
            \\
 \qquad\qquad\ \stackrel{ \hbox{\footnotesize{$rk(S)=1$}} }{\downarrow}\! \hbox{\footnotesize{$M=1$}}  \swarrow &
  &
   \searrow
  \hbox{\footnotesize{$N=1$}} \stackrel{ \hbox{\footnotesize{$rk(S)=1$}} }{\downarrow}\qquad\qquad\
            \\
    \hfill\fbox{$\phantom{\Big(}$ 8. special ${\rm GL}_N$ top $\phantom{\Big(}$} & &
    \fbox{$\phantom{\Big(}$ 9. ${\rm GL}_M$ spinless RS $\phantom{\Big(}$}\hfill
    \\ \ \\
    \hfill\hbox{family II}\qquad\qquad   &  \hbox{family III} & \qquad\qquad  \hbox{family I}\qquad\qquad\
    \\ \ \\
       & \hbox{\bf Scheme 2} &
   \end{array}
      \end{array}
   $$
Consider {\bf the first family}. The relativistic (spinless) many-body system is the elliptic Ruijsenaars-Schneider model \cite{Ruij} (box 9). It was extended to the spin case
by Krichever and Zabrodin \cite{KrichZ} (box 6). A generalization of the latter to the multi-pole case (box 3)
is also known in the literature. See, for example,  \cite{MMZ}, where such model appears in the context of dualities.
At the same time, the classical equations of motion and/or the Hamiltonian description were not known to our best knowledge. In fact,  the multispin  Ruijsenaars-Schneider is known much better at quantum level.
It is constructed by means of dynamical Felder's elliptic $R$-matrix \cite{Felder}.
Such models are also studied in the context of supersymmetric gauge theories and underlying Seiberg-Witten geometry, see e.g. \cite{MOY,MMZ} and references therein.

The quantization problem is also related to a known open problem -- to describe the Poisson and $r$-matrix structures for the spin elliptic Ruijsenaars-Schneider model. The Hamiltonian description is still unknown. This is why we discuss the Lax equations only in the general case\footnote{There is no a full proof of integrability for elliptic spin relativistic models since the classical r-matrix structure is unknown. However, there are some arguments for integrability besides existence of the Lax pair. On one hand there is a quantum RLL algebra \cite{SeZ2,SeZ3}, and on the other hand commutativity of anisotropic spin Ruijsenaars Hamiltonians was proved directly at quantum level in \cite{MZ}.}.
  At the same time much progress was achieved in the studies of  trigonometric spin Ruijsenaars-Schneider models, see \cite{AO,ChF,EPestun,Feher}.
 Although we do not address to precisely this problem, we derive explicit parametrization of spin variables through canonical variables in a special case of model 3, when all spin matrices are of rank one.

The models from {\bf the second family} are the classical analogues of XYZ spin chain including the higher rank generalizations \cite{Skl2,FTbook}. The model 2 is the ${\rm GL}_N$ inhomogeneous classical XYZ spin chain on $n$ sites. The models 5 and 8 can be viewed as 1-site classical chain with the Poisson structure given by the classical Sklyanin algebra \cite{Skl1}. From viewpoint of the classical mechanics these models are relativistic tops \cite{LOZ8}. The model 8 is a particular case of model 5 corresponding to the special case, when the matrix of spin variables $S$
has rank one.

Finally, {\bf the third family} consists of the mixed type ${\rm GL}_{NM}$ models similarly to its non-relativistic analogue from the Scheme 1.
The models 4 and 7 on the Scheme 2 were described in \cite{Reltops,ReltopsR}. The model 1 is on the top of the Scheme 2, and this is the subject of this article. Similarly to non-relativistic case the families on the Scheme 2 are related by the symplectic Hecke correspondence. For example, the models 8 and 9 are gauge equivalent. This phenomenon was originally observed by K. Hasegawa \cite{Hasegawa}, see also \cite{Chen,VZ,KZ19,ZZ}.

The study of models from the third family can be interesting from different viewpoints. Their quantum analogues are described by the mixed type quantum $R$-matrices \cite{LOSZ2}, which turn into the vertex type when $M=1$ and become of purely IRF type in the case $N=1$. The underlying quantum algebra takes the form of an intermediate case between the
Sklyanin algebra and the elliptic quantum group \cite{SeZ2}. Recently the quantum Hamiltonians for ${\rm GL}_{NM}$ model of interacting tops (box 7) were proposed and used for construction of new long-range spin chains \cite{MZ}.
The studies of multi-pole case in this context is an interesting open problem.

Another possible application
of the Gaudin type models arises in the studies of 1+1 integrable field theories generalizing the models on the Schemes 1 and 2. The 1+1 generalizations of the Calogero-Moser system is known \cite{Krichever02} as well as the continuous models
of the Heisenberg-Landau-Lifshitz type \cite{Skl3,FTbook}. The 1+1 version of the spin and multispin Calogero-Moser models (the box 3 on the Scheme 1) was given in
 \cite{LOZ} and the 1+1 Gaudin models generalizing the models 2 on the Scheme 1 were discussed in \cite{Z11,AtaZ}. The field generalizations of Hitchin systems including the multi-pole type models are actively studied nowadays \cite{Vicedo,LOZ22}.
  At the relativistic level the
 1+1 field theories corresponding to the models 5 on the Scheme 2 are known from \cite{DNY15}. Recently the 1+1 analogue of the Ruijsenaars-Schneider model was suggested in \cite{ZZ}. It is an interesting classification problem to
 describe the soliton equations related to all boxes on the Schemes 1 and 2.

\paragraph{Purpose of the paper} is to present the classification Scheme 2 and describe the most general model 1. The generalized version of this model is also proposed through $R$-matrix formulation, which includes trigonometric and rational degenerations of the elliptic model. We also suggest explicit parametrization of the reduced multispin Ruijsenaars-Schneider model with rank one matrices of the spin variables.

The paper is organized as follows. In Section \ref{sec2} we review the models from the family II and recall the classical IRF-Vertex relation between the special relativistic top and the spinless Ruijsenaars-Schneider model.
The monodromy matrices of spin chains are naturally represented in the additive form, which is similar to the one for (non-relativistic) Gaudin models. However, in contrast to non-relativistic case, where the underlying quantum or classical algebra of exchange relations is based on Lie algebra, in the relativistic case we deal with quadratic algebras of Sklyanin type. The term relativistic Gaudin model is understood as a model with some multi-pole (and multispin) Lax matrix
and (possibly complicated or, even more, unknown) quadratic Poisson structure.
In Section \ref{sec3} the most general elliptic model is described. Namely, a natural ansatz for the Lax pair is suggested and the equations of motion are derived. In Section \ref{sec4} we propose inhomogeneous generalization of the Ruijsenaars spin chain. It is obtained by the gauge transformation of IRF-Vertex type starting from ${\rm GL}_N$ XYZ spin chain. As a result, we express the spin variables in the reduced multispin Ruijsenaars-Schneider model (with rank one spin matrices) through the set of canonical variables, thus providing the Hamiltonians description for this model.
Finally, in Section \ref{sec5} we extend the results of Section \ref{sec3} to $R$-matrix formulation based on the
associative Yang-Baxter equation.


\section{Quantum $R$-matrices in quantum and classical models}
\label{sec2}
\setcounter{equation}{0}

In this Section we introduce necessary notations and recall some basic facts and definitions by considering
the model 2 from the Scheme 2 -- ${\rm GL}_N$ spin chain on $n$ sites governed by the vertex type $R$-matrix.
A detailed description for the additive form of the monodromy matrices is given.

 \subsection{Quantum $R$-matrices and Yang-Baxter equations}
 A quantum $R$-matrix in the fundamental representation of ${\rm GL}_N$ Lie group is
 some $\Mat^{\otimes 2}$-valued function
 $$
 R^\hbar_{12}(q_1,q_2)\in\Mat^{\otimes 2}
 $$
  depending on the Planck constant $\hbar$ and the spectral
 parameters $q_1,q_2$. In fact, we assume that $R^\hbar_{12}(q_1,q_2)=R^\hbar_{12}(q_1-q_2)$, and the $R$-matrix
 is the elliptic ${\rm GL}_N$ Baxter-Belavin's one (\ref{BB}) or some its degeneration.
 In the general case any $\Mat^{\otimes 2}$-valued  $R$-matrix is of the following form:
  \beq\label{w01}
  \begin{array}{c}
    \displaystyle{
  R^\hbar_{12}(q)=\sum\limits_{ijkl=1}^N R_{ij,kl}(\hbar,q)E_{ij}\otimes E_{kl}\,,
 }
 \end{array}
 \eq
 where $\{E_{ij}\,;\, i,j=1...N\}$ is the standard matrix basis in $\Mat$, and $R_{ij,kl}(\hbar,q)$ is a set of functions. By definition any quantum $R$-matrix satisfies
 the quantum Yang--Baxter equation:
  \beq\label{w02}
    \displaystyle{
  R_{12}^\hbar(q_{12})R_{13}^\hbar(q_{13})R_{23}^\hbar(q_{23})=
R_{23}^\hbar(q_{23})R_{13}^\hbar(q_{13})R_{12}^\hbar(q_{12})\,,\quad q_{ij}=q_i-q_j\,,
 }
  \eq
where all $R$-matrices are considered as elements of $\Mat^{\otimes 3}$. For example,
  \beq\label{w03}
    \displaystyle{
  R^z_{12}(q)=\sum\limits_{ijkl=1}^N R_{ij,kl}(z,q)E_{ij}\otimes E_{kl}\otimes 1_N\,,\quad
  R^z_{13}(q)=\sum\limits_{ijkl=1}^N R_{ij,kl}(z,q)E_{ij}\otimes 1_N\otimes E_{kl}\,,
  }
  \eq
where $1_N$ is the identity matrix in $\Mat$. The elliptic $R$-matrix (\ref{BB}) satisfies (\ref{w03})
and the unitarity property
\begin{equation}\label{unitarity}\begin{array}{c}
    R^{z}_{12} (x) R^{z}_{21} (-x) = (\wp (z) - \wp (x))\, 1_N \otimes 1_N\stackrel{(\ref{diffsign})}{=}
    \phi(z,x)\phi(z,-x)\, 1_N \otimes 1_N\,,
\end{array}\end{equation}
 where $\wp(x)$ is the Weierstrass elliptic function, and $\phi(z,x)$ is the elliptic Kronecker function
 (\ref{philimits}).
  One more useful property of (\ref{BB}) is the skew-symmetry:
  \beq\label{w031}
    \displaystyle{
  R^z_{12}(q)=-R^{-z}_{21}(-q)\,.
  }
  \eq

Besides the quantum Yang-Baxter equation (\ref{w02}) the elliptic Baxter-Belavin $R$-matrix in the fundamental representation of the ${\rm GL}_N$ Lie group satisfies also the so-called associative Yang-Baxter equation (AYBE)
\cite{Pol}:
\begin{equation} \begin{array}{c}
\label{AYB}
    R^{z}_{12} R^{w}_{23} = R^{w}_{13} R^{z-w}_{12} + R^{w-z}_{23} R^{z}_{13}, \quad R^u_{ab} = R^u_{ab}(q_a-q_b),
\end{array} \end{equation}
 In contrast to (\ref{w02}) the latter equation remains nontrivial in the scalar case (when $N=1$). In this case
 it turns into the genus one Fay identity (\ref{Fay}), while the $R$-matrix itself becomes the elliptic Kronecker function (\ref{philimits}). Being a solution of the Yang-Baxter equation (\ref{w02}) an $R$-matrix is fixed up
 to multiplication by an arbitrary function. But this freedom is fixed in (\ref{AYB}), and the way of fixation
 is given by the r.h.s. of the unitarity property (\ref{unitarity}). More properties of the $R$-matrices under consideration can be found in the Appendix B and in \cite{LOZ15}.

\subsection{Quantum models}

 Let us recall some details on description of vertex type models from the second family since we use it throughout the paper. As a by-product we introduce necessary notations. At quantum level the spin chain (i.e. the model 2 from the Scheme 2) is described by means of a quantum $R$-matrix.

 \paragraph{Quantum spin chains.}
 The quantum inhomogeneous ${\rm GL}_N$ spin chain\footnote{Hereinafter we assume the closed spin chains only.} is defined by the monodromy matrix
  \beq\label{w04}
  \begin{array}{c}
    \displaystyle{
 {\hat T}(z)=R_{01}^\hbar(z-z_1)R_{02}^\hbar(z-z_2)\ldots R_{0n}^\hbar(z-z_n)\in\Mat\otimes{\rm End}(\mH)\,,
 }
 \end{array}
 \eq
 where $0$ denotes the auxiliary space $\Mat$, and indices $1,..,n$ are tensor components of the (quantum) Hilbert space $\mH$.
 If all $R$-matrices are in the fundamental representation of ${\rm GL}_N$ then $\mH=(\mC^N)^{\otimes n}$ and
 ${\rm End}(\mH)=\Mat^{\otimes n}$.
 Alternatively, one writes the monodromy matrix
  \beq\label{w05}
  \begin{array}{c}
    \displaystyle{
 {\hat T}(z)={\hat L}^{1}(z-z_1){\hat L}^{2}(z-z_2)\ldots {\hat L}^{n}(z-z_n)\,.
 }
 \end{array}
 \eq
 Each Lax operator ${\hat L}^i(z-z_i)\in\Mat$ is $N\times N$ matrix, which entries are operators acting on $\mH$.
 More precisely,
  \beq\label{w06}
  \begin{array}{c}
    \displaystyle{
 {\hat L}^{i}(z-z_i)={\hat L}^\hbar({\hat S}^{i},z-z_i)\,,\qquad {\hat S}^{i}=\sum\limits_{a,b=1}^N E_{ab} {\hat S}^{i}_{ab}
 }
 \end{array}
 \eq
 and\footnote{The standard notations are used: $A_1=A\otimes 1_N$ and $A_2=1_N\otimes A$ for any matrix $A\in\Mat$.}
  \beq\label{w07}
  \begin{array}{c}
    \displaystyle{
 {\hat L}^\hbar({\hat S}^{i},z-z_i)=\tr_2\Big(R^\hbar_{12}(z-z_i){\hat S}^{i}_2\Big)\stackrel{(\ref{w01})}{=}
 \sum\limits_{a,b,c,d=1}^N R_{abcd}(\hbar,z-z_i) E_{ab}\,{\hat S}^{i}_{dc}\,.
 }
 \end{array}
 \eq
 The commutation relations of quantum algebra between the operators ${\hat S}^i_{ab}$, $i=1,...,n$, $a,b=1,...,N$ are
 generated by $[{\hat S}^i_1,{\hat S}^j_2]=0$ (or, equivalently $[{\hat L}^{i}(z),{\hat L}^{j}(w)]=0$) for $i\neq j$ and the Sklyanin algebra \cite{Skl1}, which is a set of quadratic relations coming from the quantum exchange
  relations\footnote{The relations (\ref{w08}) are assumed to hold identically in spectral parameters $z$ and $w$. Hence (\ref{w08}) provides $N^4$ relations in the general case.}:
  \beq\label{w08}
  \begin{array}{c}
    \displaystyle{
 {\hat L}^\hbar_1({\hat S}^{i},z){\hat L}^\hbar_2({\hat S}^{i},w)R^\hbar_{12}(z-w)=
 R^\hbar_{12}(z-w){\hat L}^\hbar_2({\hat S}^{i},w){\hat L}^\hbar_1({\hat S}^{i},z)\,.
 }
 \end{array}
 \eq
 It follows from these commutation relations that the monodromy matrix (\ref{w05}) also satisfies
 (\ref{w08}):
  \beq\label{w09}
  \begin{array}{c}
    \displaystyle{
 {\hat T}_1(z){\hat T}_2(w)R^\hbar_{12}(z-w)=
 R^\hbar_{12}(z-w){\hat T}_2(w){\hat T}_1(z)\,.
 }
 \end{array}
 \eq
 Therefore\footnote{In fact, here we also use invertibility of $R^\hbar_{12}(z-w)$. It is true in our case due to the unitarity property (\ref{unitarity}).}, the quantum transfer-matrix
  \beq\label{w10}
  \begin{array}{c}
    \displaystyle{
 {\hat t}(z)=\tr_0{\hat T}(z)
 }
 \end{array}
 \eq
 obeys the property
  \beq\label{w11}
  \begin{array}{c}
    \displaystyle{
 [{\hat t}(z),{\hat t}(w)]=0\,.
 }
 \end{array}
 \eq
 It is an essential idea underlying the quantum inverse scattering method since it means that
 ${\hat t}(z)$ is a generating function of commuting Hamiltonians ${\hat H}_i$, $i=1,...,n$ (i.e. $[{\hat H}_i,{\hat H}_j]=0$), which can be defined as
  \beq\label{w111}
  \begin{array}{c}
    \displaystyle{
 {\hat H}_i=\res\limits_{z=z_i}{\hat t}(z)\,.
 }
 \end{array}
 \eq
 In order to write ${\hat H}_i$ explicitly we use that the residue of ${\hat L}({\hat S}^i,z-z_i)$ at point $z=z_i$ equals ${\hat S}^i$ (see (\ref{w131}) below). Then we have
  \beq\label{w112}
  \begin{array}{c}
    \displaystyle{
 {\hat H}_i=\tr \Big( {\hat L}^\hbar({\hat S}^1,z_i-z_1)\ldots {\hat L}^\hbar({\hat S}^{i-1},z_i-z_{i-1})\cdot
 {\hat S}^{i}\cdot {\hat L}^\hbar({\hat S}^{i+1},z_i-z_{i+1})\ldots {\hat L}^\hbar({\hat S}^{n},z_i-z_{n}) \Big)\,.
 }
 \end{array}
 \eq
 Alternatively, one can calculate ${\hat H}_i$ from (\ref{w04}) in the fundamental representation. Using the property (\ref{serRx}) one finds
 from the definitions (\ref{w10}) and (\ref{w111}) that
  \beq\label{w113}
  \begin{array}{c}
    \displaystyle{
 {\hat H}_i=R^\hbar_{i,i+1}(z_i-z_{i+1})\ldots R^\hbar_{i,n}(z_i-z_{n})\cdot
            R^\hbar_{i,1}(z_i-z_{1})\ldots R^\hbar_{i,i-1}(z_i-z_{i-1})\,.
 }
 \end{array}
 \eq

 It is also important to mention that the Sklyanin algebra generated by (\ref{w08}) has the fundamental representation of ${\rm GL}_N$ Lie group
  \beq\label{w12}
  \begin{array}{c}
    \displaystyle{
{\hat S}^i_{ab}=1_N\otimes...\otimes 1_N\otimes E_{ba}\otimes 1_N\otimes...\otimes 1_N\in\Mat^{\otimes n}\,,
 }
 \end{array}
 \eq
 where $E_{ba}$ is in the $i$-th tensor component. The representation (\ref{w12}) exists because in this case
 the Lax operators (\ref{w07}) ${\hat L}({\hat S}^{i},z-z_i)$ turn into $R$-matrices $R_{0i}^\hbar(z-z_i)$
 in the fundamental representation. The exchange relations (\ref{w08}) are then fulfilled due to
 the Yang-Baxter equation (\ref{w02}).

 It is also known \cite{Skl1,Hasegawa} that ${\rm GL}_N$ Sklyanin algebra has representation in terms of difference operators
 in $N$ variables. This case is the quantum analogue of the model 8 on the Scheme 2. It is closely related to the quantum Ruijsenaars-Schneider model (the model 9 on the Scheme 2) \cite{Hasegawa}.

\paragraph{Elliptic $L$-operator.} Let us write down explicit form of the elliptic Lax operator \cite{Skl1} using our
notation (\ref{w07}). Plugging the expression for elliptic $R$-matrix (\ref{BB}) into (\ref{w07}) and using (\ref{TrT}) one gets
  \beq\label{w13}
  \begin{array}{c}
    \displaystyle{
{\hat L}^\hbar({\hat S},z)=\sum\limits_{\al} T_\al {\hat S}_\al\vf_\al(z,\om_\al+\frac{\hbar}{N})\,,
 }
 \end{array}
 \eq
  \beq\label{w131}
  \begin{array}{c}
    \displaystyle{
\res\limits_{z=0}{\hat L}({\hat S},z)={\hat S}=\sum\limits_\al T_\al {\hat S}_\al\,,
 }
 \end{array}
 \eq
 where the sum is over $\al\in\mZ_N\times \mZ_N$, the basis matrices\footnote{Some more properties of the
 basis $T_\al$ are briefly reviewed in the Appendix of \cite{ZZ}.} $T_\al$ are given in (\ref{a971}), and
 ${\hat S}_\al$ are the components of the matrix ${\hat S}$ in the basis $T_\al$. In the fundamental representation
 ${\hat S}_\al=(1/N)T_{-\al}$, and in this way one restores the Baxter-Belavin $R$-matrix from (\ref{w13}).

 As a function of the spectral parameter $z$ the Lax operator (\ref{w13}) has the following quasi-periodic behaviour
 on the lattice of periods of the elliptic curve $\mC/(\mZ+\tau\mZ)$:
  \beq\label{w14}
  \begin{array}{c}
    \displaystyle{
{\hat L}({\hat S},z+1)=Q^{-1}{\hat L}({\hat S},z)Q\,,
 }
 \\ \ \\
    \displaystyle{
{\hat L}({\hat S},z+\tau)=\exp(-\frac{2\pi\imath \hbar}{N})\,\Lambda^{-1}{\hat L}({\hat S},z)\Lambda\,,
 }
 \end{array}
 \eq
 where $Q$ and $\Lambda$ are the matrices (\ref{a972}). The properties (\ref{w14}) are derived from
 (\ref{percond}) and (\ref{Tcond0}). The latter yields $Q^{-1}T_\al Q=\exp(\pi\imath\al_2/N)T_\al$
 and $\Lambda^{-1}T_\al \Lambda=\exp(-\pi\imath\al_1/N)T_\al$.

 The $L$-operator (\ref{w13}) satisfies the exchange relations (\ref{w08}) identically in $z,w$, thus providing the
 quantum ${\rm GL}_N$ Sklyanin algebra for the set of generators ${\hat S}_\al$, $\al\in\mZ_N\times \mZ_N$.
 The matrix (\ref{w13}) is fixed by the quasi-periodic boundary conditions (\ref{w13}) together with
 fixation of the residue (\ref{w131}) at the single simple pole $z=0$.

\paragraph{Gaudin model.} The Gaudin model \cite{Gaudin} appears from the spin chain
(\ref{w04}) as the limiting case when $\hbar\rightarrow 0$. The ''Planck constant'' $\hbar$ is just a parameter of the model (\ref{w04}), so that the Gaudin model is also quantum, and the only reason to call $\hbar$ the Planck constant is the classical limit expansion (\ref{serRz})-(\ref{CYB}). Plugging (\ref{serRz}) into (\ref{w04}) one obtains the following Gaudin Hamiltonians in the first non-trivial order (in the order $\hbar^{2-n}$):
  \beq\label{w15}
  \begin{array}{c}
    \displaystyle{
 {\hat H}^{\rm G}_i=\sum\limits_{k: k\neq i}^n r_{ik}(z_i-z_k)\,.
 }
 \end{array}
 \eq
 The commutativity of these Hamiltonians follows from the classical Yang-Baxter equation (\ref{CYB}). In the limit
$\hbar\rightarrow 0$ the Sklyanin algebra (based on (\ref{w08})) turns into the Lie algebra relations
$[{\hat S}^i_1,{\hat S}^j_2]=\delta^{ij}[{\hat S}^i_1,P_{12}]$, and the generators ${\hat S}^i_{(0,0)}$ (the scalar component of the matrix ${\hat S}$) become the Casimirs. Similarly to calculation of the Hamiltonians (\ref{w15}) one can easily obtain the Lax operator for Gaudin model as the first non-trivial term in the expansion in $\hbar$ of the monodromy matrix ${\hat T}(z)$ (\ref{w06}). This yields the Lax operator
  \beq\label{w16}
  \begin{array}{c}
    \displaystyle{
 {\hat l}^{\rm G}(z)=\sum\limits_{k=1}^n \tr_2\Big(r_{12}(z-z_k){\hat S}^k_2\Big)\,.
 }
 \end{array}
 \eq
 Each term in this sum has simple pole at $z=z_k$ with the residue equal to ${\hat S}^k$.
 That is, in the Gaudin limit the multiplicative form of the monodromy matrix (\ref{w05}) turns into
 the additive form of the Lax operator (\ref{w16}), and the quadratic (Sklyanin's) Poisson structure
 turns into the linear Poisson-Lie brackets.

 \subsection{Spin chain as relativistic Gaudin model}

 In this subsection we explain what we mean by the term relativistic Gaudin model. It is just an additive form of the monodromy matrix of spin chain.

 \paragraph{Additive form of spin chain.} Let us represent the monodromy matrix of the spin chain (\ref{w05})
 in the additive form similarly to the Lax operator of the Gaudin model. We begin with the elliptic case.
 The monodromy matrix ${\hat T}(z)$ is an operator valued $N\times N$ matrix. As a function of $z$ it has $n$ simple poles
 at $z=z_i$, $i=1,...,n$. The quasi-periodic behaviour follows from (\ref{w14}):
  \beq\label{w17}
  \begin{array}{c}
    \displaystyle{
{\hat T}(z+1)=Q^{-1}{\hat T}(z)Q\,,
 }
 \\ \ \\
    \displaystyle{
{\hat T}(z+\tau)=\exp(-\frac{2\pi\imath n\hbar}{N})\,\Lambda^{-1}{\hat T}(z)\Lambda\,,
 }
 \end{array}
 \eq
 so that these properties are the same as in (\ref{w14}) but with $\hbar$ being replaced by $n\hbar$.
Therefore, ${\hat T}(z)$ acquires the form:
  \beq\label{w18}
  \begin{array}{c}
    \displaystyle{
{\hat T}(z)=\sum\limits_{k=1}^n {\hat L}^{n\hbar}({\hat\mS}^k,z-z_k)
=\sum\limits_{k=1}^n\sum\limits_{\al} T_\al {\hat \mS}^k_\al\vf_\al(z-z_k,\om_\al+\frac{n\hbar}{N})\,,
 }
 \end{array}
 \eq
 where ${\hat\mS}^k$ are residues of ${\hat T}(z)$ at the poles $z_k$. Namely,
  \beq\label{w19}
  \begin{array}{c}
    \displaystyle{
{\hat\mS}^i=\res\limits_{z=z_i}{\hat T}(z)=
}
\\ \ \\
    \displaystyle{
={\hat L}^\hbar({\hat S}^1,z_i-z_1)\ldots {\hat L}^\hbar({\hat S}^{i-1},z_i-z_{i-1})\cdot
 {\hat S}^{i}\cdot {\hat L}^\hbar({\hat S}^{i+1},z_i-z_{i+1})\ldots {\hat L}^\hbar({\hat S}^{n},z_i-z_{n})\,.
 }
 \end{array}
 \eq
 In this way we express the generators ${\hat\mS}^i_{ab}$,  $i=1,...,n$, $a,b=1,...,N$ in terms of
 the generators of the Sklyanin algebras (${\hat S}^k_{cd}$).
 It is important to mention that the commutation relations between operators ${\hat\mS}^i$ are non-trivial.
 Initially, we had $n$ copies of the Sklyanin algebra, where the operators related to different sites commute, i.e.
 $[{\hat S}^i_{ab},{\hat S}^j_{cd}]=0$  for any $a,b,c,d$ and $i\neq j$. Equivalently, $[{\hat S}_1^i,{\hat S}_2^j]=0$. But it is not true for ${\hat\mS}^i_{ab}$: $[{\hat \mS}_1^i,{\hat \mS}_2^j]\neq 0$. The commutation relations for ${\hat\mS}^i_{ab}$ can be derived from RTT relations (\ref{w09}) by substitution (\ref{w18}).
 These relations can be found in \cite{CLOZ}.

 Notice also that
  \beq\label{w191}
  \begin{array}{c}
    \displaystyle{
\tr({\hat\mS}^i)=\res\limits_{z=z_i}\,\tr\Big({\hat T}(z)\Big)\stackrel{(\ref{w112})}{=}{\hat H}_i\,.
}
 \end{array}
 \eq

\paragraph{Changing the Planck constant parameter.}
Finally, due to (\ref{w07}) from (\ref{w18}) we conclude:
  \beq\label{w201}
  \begin{array}{c}
    \displaystyle{
{\hat T}(z)=\sum\limits_{k=1}^n \tr_2\Big(R^{n\hbar}_{12}(z-z_k){\hat\mS}^k_2\Big)\,.
 }
 \end{array}
 \eq
 This form of ${\hat T}(z)$ is similar to the Lax operator (\ref{w16}) of the Gaudin model. Moreover, one can achieve
 exact matching in the following way. In fact, the constant $n\hbar$ in the $R$-matrix in (\ref{w201}) can be made
 different or even more removed at all. Consider for simplicity $n=1$ case:
 \beq\label{w202}
 \begin{array}{c}
  \displaystyle{
  L^\eta({\hat\mS},z)=\sum\limits_{a} T_a { \hat\mS}_a\vf_a(z,\om_a+\eta)\,,
 }
 \end{array}
 \eq
 Using relation
 \beq\label{w203}
 \begin{array}{c}
  \displaystyle{
  \frac{\vf_a(z-\eta,\om_a+\eta)}{\phi(z-\eta,\eta)}=\frac{\vf_a(z,\om_a)}{\vf_a(\eta,\om_a)},
 }
 \end{array}
 \eq
 it is easy to see that
 \beq\label{w204}
 \begin{array}{c}
  \displaystyle{
   L^\eta(\hat\mS,z)=\phi(z,\eta)\, L^0({\bf S},z+\eta)\,,
 }
 \end{array}
 \eq
 where
 \beq\label{w205}
 \begin{array}{c}
  \displaystyle{
  L^0({\bf S},z)=1_N { \bf S}_0+\sum\limits_{a\neq 0} T_a { \bf S}_a\vf_a(z,\om_a)
 }
 \end{array}
 \eq
and
 \beq\label{w206}
 \begin{array}{c}
  \displaystyle{
  {\hat \mS}=L^0({ \bf S},\eta)\,.
 }
 \end{array}
 \eq
 The latter means explicit change of variables:
 \beq\label{w2061}
 \begin{array}{c}
  \displaystyle{
  {\hat \mS}_0={ \bf S}_0\,,\qquad {\hat \mS}_\al={ \bf S}_\al\vf_\al(\eta,\om_\al)\,,\quad\hbox{for}\ \al\neq 0\,.
 }
 \end{array}
 \eq
 Similar procedure can be performed in the multi-pole case. Then, by redefining the operators $\mS^i$,
 the monodromy matrix (\ref{w201}) takes the form\footnote{An additional pole may arise in the described above procedure, so that the number of simple poles $n'$ may be equal to $n+1$. In (\ref{w207}) an additional generator ${ \bf S}_0$ appears. But it assumed that $\sum\limits_{k=1}^{n'}{\bf S}^k_0=0$ in order to make
  the expression in the r.h.s. of (\ref{w201}) quasi-periodic with respect to $z\rightarrow z+\tau$, that is the total number of independent generators in the scalar component remains $n$.}
  \beq\label{w207}
  \begin{array}{c}
    \displaystyle{
{\hat T}(z-\eta)=f(z)\Big(1_N{ \bf S}_0+\sum\limits_{k=1}^{n'} \tr_2\Big(r_{12}(z-z_k){\bf S}^k_2\Big)\Big)
 }
 \end{array}
 \eq
 with some function $f(z)$. Details are given in \cite{CLOZ}.

 To summarize, the monodromy matrix (\ref{w05}) can be represented in the Gaudin like form (\ref{w16})
 after some set of redefinitions. While in (\ref{w16}) the operators ${\hat S}^i$ are generators of ($i$-th copy of) Lie algebra, in  (\ref{w207}) we have the operators ${\bf S}^i$ originated from $n$ copies of the Sklyanin algebra.

 Another important remark is that the above mentioned trick allows to change the Planck constant parameter entering (\ref{w201}). Indeed, one can remove it in a way described above, and then restore a different parameter. This means that instead of exchange relations (\ref{w08}) one can study more general relations
  \beq\label{w208}
  \begin{array}{c}
    \displaystyle{
 {\hat L}^\eta_1({\hat S}^{i},z){\hat L}^\eta_2({\hat S}^{i},w)R^\hbar_{12}(z-w)=
 R^\hbar_{12}(z-w){\hat L}^\eta_2({\hat S}^{i},w){\hat L}^\eta_1({\hat S}^{i},z)
 }
 \end{array}
 \eq
 with two parameters $\hbar$ and $\eta$. It can be shown that these relations are indeed fulfilled for the elliptic Lax operator (\ref{w13}) and the elliptic $R$-matrix (\ref{BB}) in a sense that (\ref{w208}) is equivalent to a set of quadratic algebra relations identically in spectral parameters $z$ and $w$. The quadratic relations explicitly depend on two parameters. But one of them can be removed by the above mentioned redefinitions. Then we are left with a single parameter as it should be in the Sklyanin algebras. However, it is sometimes useful to keep both parameters. We will use this possibility below when studying the classical limit.

\paragraph{Additive form for the fundamental representation.}
Technically, the additive representation is based on the identity
  \beq\label{w20}
  \begin{array}{c}
    \displaystyle{
\prod_{i=1}^n \phi(x_i,y_i)=\sum_{i=1}^n \phi\biggl(x_i,\sum_{m=1}^n y_m\biggr )\prod_{j\neq i}^n \phi(x_j-x_i,y_j)\,,
 }
 \end{array}
 \eq
which is the $n$-th order generalization of the addition formula (\ref{Fay}). Indeed, by definition (\ref{w05})
any matrix element of ${\hat T}(z)$ is a sum of terms, which dependence on $z$ has the form $$\vf_{\alpha_1}(z-z_1,\om_{\alpha_1}+\hbar/N)\ldots\vf_{\alpha_n}(z-z_n,\om_{\alpha_n}+\hbar/N)$$
for some $\al_1,...,\al_n$. Using (\ref{w20}) for $x_i=z-z_i$ one gets (\ref{w18})-(\ref{w19}).

For $R$-matrices satisfying the associative Yang-Baxter equation (\ref{AYB}) there is an $R$-matrix analogue
of the $n$-th order formula (\ref{w20})\footnote{See Section 4 in \cite{MZ}. Similar formula was proved in \cite{Z18}.}:
  \beq\label{w21}
  \begin{array}{l}
  \displaystyle{
R_{0,1}^{y_1}(x_1)R_{0,2}^{y_2}(x_2)\dots R_{0,n}^{y_n}(x_n)=
 }
 \\ \ \\
   \displaystyle{
 =R_{0,n}^{Y}(x_n)\cdot R^{y_1}_{n,1}(x_1-x_n)R^{y_2}_{n,2}(x_2-x_{n})\dots R^{y_{n-1}}_{n,n-1}(x_{n-1}-x_n)+
 }
  \\ \ \\
   \displaystyle{
 +R_{n-1,n}^{y_n}(x_n-x_{n-1})\cdot R_{0,n-1}^Y(x_{n-1})\cdot R^{y_1}_{n-1,1}(x_1-x_{n-1})
 \dots R^{y_{n-2}}_{n-1,n-2}(x_{n-2}-x_{n-1})+
 }
  \\ \ \\
   \displaystyle{
 +R_{n-2,n-1}^{y_{n-1}}(x_{n-1}-x_{n-2})R_{n-2,n}^{y_{n}}(x_{n}-x_{n-2})\cdot
 R_{0,n-2}^Y(x_{n-2})\cdot
  }
  \\ \ \\
   \displaystyle{
 \hspace{60mm}\cdot R^{y_1}_{n-2,1}(x_1-x_{n-2})
 \dots R^{y_{n-3}}_{n-2,n-3}(x_{n-3}-x_{n-2})+
 }
   \\ \ \\
   \displaystyle{
 \vdots
 }
    \\ \ \\
   \displaystyle{
 +R_{1,2}^{y_2}(x_2-x_1)R_{1,3}^{y_3}(x_3-x_1)\dots R_{1,n}^{y_n}(x_{n}-x_{1})\cdot R_{0,1}^Y(x_1)
 \,,
 }
 \end{array}
 \eq
where $Y=\sum\limits_{m=1}^n
y_m$. When $n=2$ it is the equation (\ref{AYB}). In the scalar case ($N=1$) the above identity (\ref{w21})
turns into (\ref{w20}) since $R$-matrices in $N=1$ case become $\phi$-functions.
Plugging $y_1=...=y_n=\hbar$ (so that $Y=n\hbar$) and $x_i=z-z_i$ into (\ref{w21}) one gets
the following additive formula for the monodromy matrix (\ref{w04}):

  \beq\label{w22}
  \begin{array}{l}
  \displaystyle{
{\hat T}(z)=R_{0,1}^{\hbar}(z-z_1)R_{0,2}^{\hbar}(z-z_2)\dots R_{0,n}^{\hbar}(z-z_n)=
 }
 \\ \ \\
   \displaystyle{
 =R_{0,n}^{n\hbar}(z-z_n)\cdot R^{\hbar}_{n,1}(z_n-z_1)R^{\hbar}_{n,2}(z_n-z_2)\dots R^{\hbar}_{n,n-1}(z_{n}-z_{n-1})+
 }
  \\ \ \\
   \displaystyle{
 +R_{n-1,n}^{\hbar}(z_{n-1}-z_{n})\cdot R_{0,n-1}^{n\hbar}(z-z_{n-1})\cdot R^{\hbar}_{n-1,1}(z_{n-1}-z_{1})
 \dots R^{\hbar}_{n-1,n-2}(z_{n-1}-z_{n-2})+
 }
  \\ \ \\
   \displaystyle{
 +R_{n-2,n-1}^{\hbar}(z_{n-2}-z_{n-1})R_{n-2,n}^{\hbar}(z_{n-2}-z_{n})\cdot
 R_{0,n-2}^{n\hbar}(z-z_{n-2})\cdot
  }
  \\ \ \\
   \displaystyle{
 \hspace{60mm}\cdot R^{\hbar}_{n-2,1}(z_{n-2}-z_1)
 \dots R^{\hbar}_{n-2,n-3}(z_{n-2}-z_{n-3})+
 }
   \\ \ \\
   \displaystyle{
 \vdots
 }
    \\ \ \\
   \displaystyle{
 +R_{1,2}^{\hbar}(z_1-z_2)R_{1,3}^{\hbar}(z_1-z_3)\dots R_{1,n}^{\hbar}(z_1-z_n)\cdot R_{0,1}^{n\hbar}(z-z_1)
 \,.
 }
 \end{array}
 \eq
By taking trace over zero tensor component and evaluating residues at $z=z_i$ one easily reproduces (\ref{w113}).

\subsection{Classical models}

  \paragraph{Classical Sklyanin algebra and relativistic integrable tops.}
The model defined in (\ref{w05}) is a quantum version of the model 2 from the Scheme 2. Its classical version was proposed in \cite{Skl1}, see also \cite{Skl2,FTbook}.
 Main idea is very similar to the one described above in the quantum case. In classical mechanics we deal with the Lax matrix of the form
  \beq\label{w23}
  \begin{array}{c}
    \displaystyle{
 L(S,z)=1_N S_0+\sum\limits_{\al\neq 0} T_\al S_\al \vf_\al(z,\om_\al)\,,
 }
 \end{array}
 \eq
 which is similar (\ref{w207}), but here $S=\sum\limits_{\al}T_\al S_\al\in\Mat$ is a matrix of $N^2$ dynamical variables $S_\al=S_{\al_1,\al_2}$ (coordinates on the phase space). The Poisson structure is generated by the quadratic $r$-matrix structure
  \beq\label{w24}
  \begin{array}{c}
    \displaystyle{
 \{L_1(S,z),L_2(S,w)\}=[r_{12}(z-w),L_1(S,z)L_2(S,w)]\,,
 }
 \end{array}
 \eq
 where
  \beq\label{w25}
  \begin{array}{c}
    \displaystyle{
 \{L_1(S,z),L_2(S,w)\}=\sum\limits_{i,j,k,l=1}^N \{L_{ij}(S,z),L_{kl}(S,w)\}E_{ij}\otimes E_{kl}=
 \sum\limits_{\al,\be\in\,\mZ_N^{\times 2}} \{L_{\al}(S,z),L_{\be}(S,w)\} T_\al\otimes T_\be
 }
 \end{array}
 \eq
and $r_{12}(z-w)$ is the classical elliptic $r$-matrix. It can be shown that (\ref{w24}) is identically fulfilled
in $z,w$ and provides the set of Poisson brackets $\{S_\al,S_\be\}$, which is called the classical Sklyanin algebra.
The underlying integrable system is the relativistic elliptic top. It is in the box 5 on the Scheme 2. Let us notice
that the first flow generated by the Hamiltonian $H=S_0$ provides equations of motion, which have precisely the same form as those in the non-relativistic case (for the model 5 on the Scheme 1) generated by the Hamiltonian
$H=(-1/2)\sum_{\al\neq 0} S_\al S_{-\al}\wp(\om_\al)$ and the linear Poisson-Lie brackets. This phenomenon
reflects existence of bi-Hamiltonian structure. See details in \cite{KLO}.

Following \cite{LOZ8} we slightly change the above definitions (\ref{w23})-(\ref{w24}). Namely, we consider the Lax matrix with explicit dependence on the parameter $\eta$:
  \beq\label{w26}
  \begin{array}{c}
    \displaystyle{
 L^\eta(S,z)=\sum\limits_{\al} T_\al S_\al \vf_\al(z,\om_\al+\frac{\eta}{N})
 }
 \end{array}
 \eq
 or
  \beq\label{w261}
  \begin{array}{c}
    \displaystyle{
 L^\eta(S,z)=\tr_2\Big(R_{12}^\eta(z)S_2\Big)\,,
 }
 \end{array}
 \eq
 which is obtained from the elliptic quantum $L$-operator ${\hat L}^\eta({\hat S},z)$ by replacing ${\hat S}$ with $S$. The relation between descriptions in terms of Lax matrices  (\ref{w23}) and (\ref{w26}) is the same as in (\ref{w206}). So that in (\ref{w261}) we added by hands explicit dependence on the additional parameter $\eta$. The corresponding Sklyanin algebra is now generated by
 \beq\label{w27}
 \begin{array}{c}
  \displaystyle{
 \{L_1^{\eta}(z,S),L_2^{\eta}(w,S)\}=\frac{1}{c}\,[L_1^{\eta}(z,S)L_2^{\eta}(S,w),r_{12}(z-w)]\,,
 }
 \end{array}
 \eq
 where $c$ is another constant parameter.
 It is straightforwardly follows from the quantum exchange relations (\ref{w208}) in the limit $\hbar\rightarrow 0$.
 Namely, one should make a substitution $\hbar\rightarrow (-1/c)\hbar$ and then consider the classical limit (\ref{serRz}) with the standard definition
 \beq\label{w28}
 \begin{array}{c}
  \displaystyle{
 \{L_1^{\eta}(z,S),L_2^{\eta}(w,S)\}=\lim\limits_{\hbar\rightarrow 0}\frac{ {\hat L}_1^{\eta}(z,{\hat S}) {\hat L}_2^{\eta}(w,{\hat S})-{\hat L}_2^{\eta}(z,{\hat S}) {\hat L}_1^{\eta}(w,{\hat S}) }{\hbar}
 }
 \end{array}
 \eq
 In deriving this relation from (\ref{w208}) we used independence of parameters $\eta$ and $\hbar$. Also, compared to (\ref{w24}), we put a factor $(-1/c)$ in the r.h.s. This factor is just for convenience of describing relation to Ruijsenaars-Schneider model (see below).

 Direct computations show that (\ref{w28}) is equivalent to the following set of Poisson brackets in the classical Sklyanin algebra:
  \beq\label{w29}
 \begin{array}{c}
   \displaystyle{
 \{S_\al,S_\be\}
 =\frac{1}{c}\sum\limits_{\xi\in\mZ_N^{\times 2},\,\xi\neq 0} \kappa_{\al-\be,\xi}S_{\al-\xi}S_{\be+\xi}
 \Big( E_1(\om_\xi)-E_1(\om_{\al-\be-\xi})+E_1(\om_{\al-\xi}+\eta)-E_1(\om_{\be+\xi}+\eta) \Big)\,,
 }
 \end{array}
  \eq
where $\kappa_{\al,\be}$  are the constants from (\ref{Tcond}).

The relativistic top (model 5 on the Scheme 2) is defined as follows. The Poisson brackets (\ref{w29}) together
with the Hamiltonian
 \beq\label{w30}
 \begin{array}{c}
  \displaystyle{
 H^{top}=cN S_0=c\,\tr\, S=c\,\frac{\tr L^\eta(S,z)}{\phi(z,\eta)}\,.
 }
 \end{array}
 \eq
generate dynamics given by the following equations of motion:
 \beq\label{w31}
 \begin{array}{c}
  \displaystyle{
\dot S=[S,J^\eta(S)]\,.
 }
 \end{array}
 \eq
 They have the form of multi-dimensional Euler-Arnold top. The linear operator $J^\eta$ plays the role
 of the inverse tensor of inertia (in principal axes). It has the form:
 \beq\label{w32}
 \begin{array}{c}
  \displaystyle{
 J^\eta(S)=1_N S_0 E_1(\eta)+\sum\limits_{\al\in\mZ_N^{\times 2}, \al\neq
0}T_\al S_\al J_\al^\eta\,,\quad
 J_\al^\eta=E_1(\eta+\om_\al)-E_1(\om_\al)\,.
 }
 \end{array}
 \eq
The equations (\ref{w31}) are represented in the Lax form
 \beq\label{w34}
 \begin{array}{c}
  \displaystyle{
 {\dot L}^{\eta}(S,z)=\{H^{top},L^{\eta}(S,z)\}=[L^{\eta}(S,z),M(S,z)]
 }
 \end{array}
 \eq
with the $M$-matrix
 \beq\label{w35}
 \begin{array}{c}
  \displaystyle{
 M(S,z)=-\tr_2\Big(r_{12}(z)S_2\Big)\,.
 }
 \end{array}
 \eq
In the elliptic case the above statement is verified directly using identities from Appendix A.
At the same time the construction of the relativistic top can be generalized to any solution
of the associative Yang-Baxter equation (\ref{AYB}). Indeed, the definition of the Lax pair (\ref{w261})
and (\ref{w35}) does not use explicit form of the underlying $R$-matrix. The calculation providing the proof of the Lax equations can be performed using $R$-matrix identities coming from (\ref{AYB}). This type identities are collected in the Appendix B. In this case the expression $J^\eta(S)$ acquires the following form:
  \beq\label{w36}
  \begin{array}{c}
      \displaystyle{
  J^\eta(S)=\tr\Big(J^\eta_{12}S_2\Big)\,,\qquad   J^\eta_{12}=R^{\eta,(0)}_{12}-r^{(0)}_{12}\,,
}
  \end{array}
  \eq
where $R^{\eta,(0)}_{12}$ is the coefficient of expansion (\ref{serRx}), and $r^{(0)}_{12}$ is the coefficient of expansion (\ref{CYB2}).
The description of the (generalized) relativistic top in terms of $R$-matrices  was proposed in \cite{LOZ8} and then proved
 in \cite{LOZ2} and \cite{KZ19}. Finally, let us remark that the described above integrable top can be viewed as the spin chain on a single site.

 \paragraph{Classical spin chains.}
Next, we proceed to the classical spin chain on $n$ sites by introducing the classical monodromy matrix
  \beq\label{w37}
  \begin{array}{c}
    \displaystyle{
 T(z)=L^\eta(S^1,z-z_1)\ldots L^\eta(S^n,z-z_n)\,,
 }
 \end{array}
 \eq
where $S^1,...,S^n\in\Mat$ are $n$ matrices of size $N\times N$ of dynamical variables. Due to (\ref{w261}) it can be also represented in the form:
  \beq\label{w38}
  \begin{array}{c}
    \displaystyle{
 T_0(z)=\tr_{12...n}\Big({\hat T}(z)S^1_1S^2_2...S^n_n\Big)\,,
 }
 \end{array}
 \eq
 where (similarly to notations $S_1=S\otimes 1_N$ and $S_2=1_N\otimes S$) $S^i_i$ means the $S^i$ matrix in the $i$-th tensor component\footnote{More precisely, in (\ref{w38}) we assume $S_i^i$ be the $S^i$ matrix in the $i+1$-th tensor component. It is because $S^i$ is an element of $\Mat^{\otimes(n+1)}$ since the first tensor component has number 0 (it is the matrix space of $T(z)$).}, while ${\hat T}(z)$ is the quantum monodromy matrix in the fundamental representation (\ref{w04}).

The Poisson structure for the spin chain is given by $n$ copies of the Sklyanin algebra generated by $n$ copies of
the quadratic $r$-matrix structure
 \beq\label{w39}
 \begin{array}{c}
  \displaystyle{
 \{L_1^{\eta}(S^i,z),L_2^{\eta}(S^j,w)\}=\frac{\delta^{ij}}{c}\,[L_1^{\eta}(S^i,z)L_2^{\eta}(S^i,w),r_{12}(z-w)]\,,
 }
 \end{array}
 \eq
 so that any Poisson brackets between variables from different sites vanish. The monodromy matrix satisfies the same relations
 \beq\label{w40}
 \begin{array}{c}
  \displaystyle{
 \{T_1(z),T_2(w)\}=\frac{1}{c}\,[T_1(z)T_2(w),r_{12}(z-w)]\,.
 }
 \end{array}
 \eq
Therefore, the classical transfer matrix
 \beq\label{w401}
 \begin{array}{c}
  \displaystyle{
 t(z)=\tr T(z)
 }
 \end{array}
 \eq
 is a generating function of the classical Hamiltonians commuting with respect to the Poisson structure given by a direct sum of $n$ Sklyanin algebras.

 Further description is parallel to the quantum case. One can represent the monodromy matrix (\ref{w37}) in the form
  \beq\label{w41}
  \begin{array}{c}
    \displaystyle{
{ T}(z)=\sum\limits_{k=1}^n { L}^{n\eta}({\mS}^k,z-z_k)
=\sum\limits_{k=1}^n\sum\limits_{\al} T_\al { \mS}^k_\al\vf_\al(z-z_k,\om_\al+\frac{n\eta}{N})\,,
 }
 \end{array}
 \eq
 where ${\mS}^k$ are again residues of ${ T}(z)$ at the poles $z_k$, i.e.
  \beq\label{w42}
  \begin{array}{c}
    \displaystyle{
{\mS}^i=\res\limits_{z=z_i}{ T}(z)
={ L}^\eta({ S}^1,z_i-z_1)\ldots { L}^\eta({ S}^{i-1},z_i-z_{i-1})\cdot
 { S}^{i}\cdot { L}^\eta({ S}^{i+1},z_i-z_{i+1})\ldots { L}^\eta({ S}^{n},z_i-z_{n})\,.
 }
 \end{array}
 \eq
The non-local Hamiltonians are the classical analogues of (\ref{w112}):
  \beq\label{w421}
  \begin{array}{c}
      \displaystyle{
 {H}_i=\res\limits_{z=z_i}\,\tr\Big( T(z)\Big)=\tr(\mS^i)=
 }
  \\ \ \\
    \displaystyle{
=\tr \Big( { L}^\eta({ S}^1,z_i-z_1)\ldots { L}^\eta({ S}^{i-1},z_i-z_{i-1})\cdot
 { S}^{i}\cdot { L}^\eta({ S}^{i+1},z_i-z_{i+1})\ldots { L}^\eta({ S}^{n},z_i-z_{n}) \Big)\,.
 }
 \end{array}
 \eq
 In this way we come to the additive form of the monodromy matrix. It can be view as relativistic Gaudin model by the following reason. Using the change of variables of type (\ref{w2061}) one can (similarly to the quantum case (\ref{w207})) represent $T(z)$ in the form (see the footnote for (\ref{w207})):
  \beq\label{w43}
  \begin{array}{c}
    \displaystyle{
{ T}(z-\eta)=f(z)\Big(1_N{ \ti S}_0+\sum\limits_{k=1}^n \tr_2\Big(r_{12}(z-z_k){\ti S}^k_2\Big)\Big)
 }
 \end{array}
 \eq
 or
  \beq\label{w44}
  \begin{array}{c}
    \displaystyle{
 { T}(z-\eta)=f(z)\Big(1_N{ \ti S}_0+\Big(\sum\limits_{k=1}^n 1_N{ \ti S}^k_0 E_1(z-z_k) +\sum\limits_{\ga\neq 0}
  T_\ga{\ti S}^k_\ga \vf_\ga(z-z_k,\om_\ga)\Big)\Big)\,.
 }
 \end{array}
 \eq
 It is the form of the Lax matrix for the classical Gaudin model \cite{STS}, which we considered in
 our previous paper \cite{TZ}. In that model the Poisson structure
 is given by the linear Poisson-Lie brackets, while in (\ref{w43}) we deal with some quadratic Poisson algebra coming from $n$ copies of Sklyanin algebra via (\ref{w42}). Explicit formulae for the brackets can be found in \cite{CLOZ}. The term relativistic Gaudin model implies the multi-pole (and multispin) structure together
 with quadratic Poisson brackets.

 Let us remark that the above description naturally arises in the Hitchin approach to integrable systems, where
 the Lax matrices are considered as sections of a certain bundles over curves. The relativistic
 generalization of Hitchin systems was studied in \cite{BDOZ,CLOZ} and \cite{EPestun}.

 \paragraph{Gauge equivalence between RS model and relativistic top.}
Here, following \cite{Hasegawa} (see also \cite{Chen,KZ19,VZ,ZZ}) we briefly describe
the change of variables between the models 8 and 9 on the Scheme 2.

The $N$-body Ruijsenaars-Schneider model \cite{Ruij} is given by the following Lax matrix of size $N\times N$:
  \beq\label{w45}
  \begin{array}{l}
  \displaystyle{
 L^{\rm RS}_{ij}(z)=\phi(z,q_{ij}+\eta)\,b_j\,,\ i,j=1,\ldots ,N\,,
 }
 \end{array}
 \eq
 where
  \beq\label{w46}
  \begin{array}{l}
  \displaystyle{
 b_j=\prod_{k:k\neq j}^N\frac{\vth(q_{j}-q_k-\eta)}{\vth(q_{j}-q_k)}\,
 e^{p_j/c}\,,\quad c={\rm const}\in\mC\,.
 }
 \end{array}
 \eq
The Hamiltonian
  \beq\label{w47}
  \begin{array}{l}
  \displaystyle{
 H^{\rm RS}=c\frac{\tr L^{\rm RS}(z)}{\phi(z,\eta)}=c\sum\limits_{j=1}^N b_j(p,q)
 }
 \end{array}
 \eq
 with the canonical Poisson brackets
$
\{p_i,q_j\}=\delta_{ij}$ (and $\{p_i,p_j\}=\{q_i,q_j\}=0
$)
 generates equations of motion
  \beq\label{w48}
  \begin{array}{l}
  \displaystyle{
 {\ddot q}_i=\sum\limits_{k:k\neq i}^N{\dot q}_i{\dot q}_k
 (2E_1(q_{ik})-E_1(q_{ik}+\eta)-E_1(q_{ik}-\eta))\,,\quad i=1, \ldots N\,.
 }
 \end{array}
 \eq
 Introduce the elliptic intertwining matrix \cite{Baxter3}:
  \beq\label{w49}
  \begin{array}{l}
  \displaystyle{
 g(z,q)=
 \vth\left[  \begin{array}{c}
 \frac12-\frac{i}{N} \\ \frac N2
 \end{array} \right] \left(z-Nq_j+\sum\limits_{m=1}^N
 q_m\left.\right|N\tau\right)\,\frac{1}{\prod\limits_{k:k\neq j}^N\vth(q_j-q_k)}\,,
 }
 \end{array}
 \eq
where
the theta-functions with characteristics appear:
\beq\label{w50}
 \begin{array}{c}
  \displaystyle{
\theta{\left[\begin{array}{c}
a\\
b
\end{array}
\right]}(z|\, \tau ) =\sum_{j\in \z}
\exp\left(2\pi\imath(j+a)^2\frac\tau2+2\pi\imath(j+a)(z+b)\right)\,.
}
 \end{array}
 \eq
 This matrix was used to describe the IRF-Vertex correspondence in 2d integrable lattice models.

The announced relation between models 8 and 9 consists of two steps. The first one is that the Lax matrix
(\ref{w45}) is represented in the factorized form:
  \beq\label{w51}
  \begin{array}{l}
  \displaystyle{
 L^{\rm RS}(z)=\frac{\vth'(0)}{\vth(\eta)}\,
 g^{-1}(z,q)g(z+N\eta,q)\,e^{P/c}\,,\quad P={\rm diag}(p_1,\ldots ,p_N)\,.
 }
 \end{array}
 \eq
The second step is the statement that the gauge transformed matrix $g(z,q)L^{\rm RS}(z)g^{-1}(z,q)$ has the form
of the Lax matrix of relativistic top (\ref{w26}). Namely,
  \beq\label{w52}
  \begin{array}{l}
  \displaystyle{
 L^{N\eta}(S,z)=g(z,q)L^{\rm RS}(z)g^{-1}(z,q)=\frac{\vth'(0)}{\vth(\eta)}\,
 g(z+N\eta,q)\,e^{P/c}g^{-1}(z,q)\,.
 }
 \end{array}
 \eq
 It is a nontrivial exercise to show that the r.h.s. of (\ref{w52}) indeed has the form
  \beq\label{w521}
  \begin{array}{l}
  \displaystyle{
 L^{N\eta}(S,z)=\sum\limits_{\al}S_\al T_\al \vf_\al(z,\om_\al+\eta)
 }
 \end{array}
 \eq
 with some matrix $S$ (see \cite{Hasegawa,Chen,VZ} and the appendix in \cite{ZZ} for details).
The matrix $S$ in this case is special. Its rank equals one, and due to (\ref{w52}) it is a function of the
canonical variables $p_i$, $q_j$. Explicit change of variables can be calculated:
  \beq\label{w53}
  \begin{array}{c}
  \displaystyle{
 S_a(p,q,\eta,c)=\frac{(-1)^{a_1+a_2}}{N}\,e^{\pi\imath a_2\om_a}
 \sum\limits_{m=1}^N e^{p_m/c} e^{2\pi\imath a_2(\eta-{\bar q}_m)}
 \frac{\vth(\eta+\om_\al)}{\vth(\eta)}
 \prod\limits_{l:\,l\neq m}^N\frac{\vth(q_m-q_l-\eta-\om_a)}{\vth(q_m-q_l)}\,,
 }
 \end{array}
 \eq
 where $a\in\mZ_N\times\mZ_N$ and ${\bar q}_m=q_m-(1/N)\sum\limits_{k=1}^N q_k$ is the coordinate in the center of masses frame.

\section{Elliptic Lax pairs}
\label{sec3}
\setcounter{equation}{0}

In this Section we consider the most general model 1 from the Scheme 2. Our purpose is to propose
the Lax pair and derive equations of motion. Then we briefly consider some particular cases including
the models 2, 3 and 4.

For all the models from the family III the Lax matrices are of size $NM\times NM$ with a natural block-matrix structure:
 \beq\label{w55}
 \mL(z)= \left.\left(\begin{array}{cccc}
  \mL^{11}(z)  &  \mL^{12}(z) &  \ldots & \mL^{1M}(z)
  \\ \ \\
 \mL^{21}(z)  & \mL^{22}(z)    & \ldots & \mL^{2M}(z)
 \\
 \vdots & \vdots &  \ddots & \vdots \\
 \mL^{M1}(z)  & \mL^{M2}(z)  & \ldots & \mL^{MM}(z)
 \end{array}\right)\quad \right\}
 \
\begin{array}{c}
  \hbox{each column or row}
  \\
  \hbox{contains}\ M\ \hbox{blocks}
  \\
  \hbox{of size}\ N\times N
\end{array}
 \eq
Equivalently,
 \beq\label{w56}
 \begin{array}{c}
  \displaystyle{
  \mL(z)=\sum\limits_{i,j=1}^M E_{ij}\otimes
  \mL^{ij}(z)\in\MatNM\,,\quad  \mL^{ij}(z)\in\Mat\,.
 }
 \end{array}
 \eq
 Inside $N\times N$ blocks (that is inside $N\times N$ matrices $\mL^{ij}(z)$) we use the basis (\ref{a971})
 as we did for relativistic top (\ref{w26}). A similar block-matrix structure is used for (the accompany) $M$-matrix entering the Lax equation $\dot\mL(z)=[\mL(z),\mM(z)]$ and the residues $\mS^c$, $c=1,...,n$ of $\mL(z)$  at simple poles $z_1,...,z_n$, which are the classical spin variables:
 \beq\label{w561}
 \begin{array}{c}
  \displaystyle{
  \mS^a=\sum\limits_{i,j=1}^M E_{ij}\otimes \mS^{ij,a}\,,\quad \mS^a\in{\rm Mat}(NM,\mC)\,,\quad
  \mS^{ij,a}\in\Mat\,.
 }
 \end{array}
 \eq
 Each matrix $\mS^{ij,a}$ has components $\mS^{ij,a}_\ga$, $\ga\in\mZ_N\times \mZ_N$ in the basis $T_\ga$ (\ref{a971}). The zero component (for $T_{0,0}=1_N$) is denoted as either $\mS^{ii,a}_{0,0}$ or just $\mS^{ii,a}_0$.

\subsection{General case}\label{sec31}

The Lax pair for the general model has the block-matrix structure (\ref{w55}) with the $N\times N$ blocks
\begin{equation}\begin{array}{c}
    \displaystyle{ \label{relLaxl} \mL^{ij}(z) =
     \sum_{\gamma\in\mZ_N^{\times 2}} \sum_{a=1}^n T_\gamma \mS^{ij,a}_\gamma \vf_\gamma(z-z_a, \om_\gamma + \frac{q_{ij} + \eta}{N}),
     \quad q_{ij} = q_i - q_j, \quad \om_\gamma = \frac{\gamma_1 + \gamma_2 \tau}{N}}
\end{array}\end{equation}
and
\begin{equation}\begin{array}{c}
\label{relLaxm}
\displaystyle{
  \mM(z)=\sum\limits_{i,j=1}^M E_{ij}\otimes
  \mM^{ij}(z)\in\MatNM\,,\quad  \mM^{ij}(z)\in\Mat\,.
 }
 \\ \ \\
    \displaystyle{ \mM^{ii}(z) = - \sum_{a=1}^n T_0 \mS^{ii,a}_{0,0} (E_1(z-z_a) + E_1(\frac{\eta}{N})) - \sum_{\gamma \neq 0} \sum_{a=1}^n  T_\gamma \mS^{ii,a}_\gamma \vf_\gamma (z-z_a, \om_\gamma), } \\ \ \\
    \displaystyle{ \mM^{ij}(z) = - \sum_{\gamma} \sum_{a=1}^n T_\gamma \mS^{ij,a}_\gamma \vf_\gamma(z-z_a, \om_\gamma + \frac{q_{ij}}{N}), \quad i\neq j.
}\end{array}\end{equation}
Introduce the following set of linear operators (analogues of the inverse inertia tensor $J^\eta$ (\ref{w31})-(\ref{w32})):
\begin{equation}\begin{array}{c} \displaystyle{
\label{inerten1}
    \widetilde{J}^{\eta, q_{mn}}_a (\mS^{ij,b}) = \sum_{\gamma} \mS^{ij,b}_\gamma T_\gamma \Big( \vf_\gamma (z_{ab}, \om_\gamma + \frac{q_{mn} + \eta}{N}) - \vf_\gamma (z_{ab}, \om_\gamma + \frac{q_{mn}}{N}) \Big)\,,\quad\hbox{for}\ a\neq b\,,
}\end{array}\end{equation}
\begin{equation}\begin{array}{c} \displaystyle{
\label{inerten2}
    \widetilde{J}^{\eta}_a (\mS^{ij,b}) = \sum_{\gamma \neq 0} \mS^{ij,b}_\gamma T_\gamma \Big( \vf_\gamma (z_{ab}, \om_\gamma + \frac{\eta}{N}) - \vf_\gamma (z_{ab}, \om_\gamma) \Big)\,,\quad\hbox{for}\ a\neq b\,,
}\end{array}\end{equation}
\begin{equation}
\label{inerten3}
    J^{\eta, q_{mn}}(\mS^{ij,b}) = \sum_\gamma \mS^{ij,b}_\gamma T_\gamma \Big( E_1(\omega_\gamma + \frac{q_{mn} + \eta}{N}) - E_1(\omega_\gamma + \frac{q_{mn}}{N}) \Big),
\end{equation}
\begin{equation}
\label{inerten4}
    J^\eta(\mS^{ij,b}) = \sum_{\gamma \neq 0} \mS^{ij,b}_\gamma T_\gamma \Big( E_1(\omega_\gamma + \frac{\eta}{N}) - E_1(\omega_\gamma) \Big).
\end{equation}
In the above definitions $i,j,m,n=1,...,M$ and $a,b=1,...,n$. The summation in $\ga$ is over $\mZ_N\times\mZ_N$. If $\ga\neq 0$ then the summation is over $\mZ_N\times\mZ_N\setminus (0,0)$.

Main purpose of the current subsection is to derive equations of motion for the general model. Let us write down the answer for the diagonal and non-diagonal blocks separately.
For non-diagonal blocks $\mS^{ij,a}$ ($i\neq j$) equations of motion have the form:
\begin{equation}\begin{array}{c}
\label{eqij}
    \displaystyle{
    \Dot{ \mS }^{ij,a} = \mS^{ij,a} J^\eta (\mS^{jj, a}) - J^\eta ( \mS^{ii,a} ) \mS^{ij,a} + \sum_{k:k\neq j}^M \mS^{ik,a} J^{\eta, q_{kj}} (\mS^{kj,a}) - \sum_{k:k\neq i}^M J^{\eta, q_{ik}} (\mS^{ik,a}) \mS^{kj,a} +
     }
    \\ \ \\
    \displaystyle{
    + \sum_{b: b\neq a}^n \mS^{ij,a} \Big(\mS^{ii,b}_{0,0} - \mS^{jj,b}_{0,0}\Big) \Big( E_1(\frac{\eta}{N}) + E_1(z_{ab}) - \phi(z_{ab}, \frac{\eta}{N}) \Big) +
     }
     \\ \ \\
    \displaystyle{
     + \sum_{b: b\neq a}^n \Big(\mS^{ij,a} \widetilde{J}^\eta_a (\mS^{jj,b}) - \widetilde{ J }^{\eta}_a ( \mS^{ii,b} ) \mS^{ij,a} \Big) + \sum_{b: b\neq a}^n \Big( \sum_{k:k\neq j}^M \mS^{ik,a} \widetilde{J}^{\eta, q_{kj}}_a (\mS^{kj, b}) - \sum_{k:k\neq i}^M \widetilde{J}^{\eta, q_{ik}}_a (\mS^{ik, b}) \mS^{kj,a} \Big)\,.
     }
\end{array}\end{equation}
Equations of motion for the diagonal blocks are as follows:
\begin{equation}\begin{array}{c}
\label{eqii}
    \displaystyle{ \Dot{\mS}^{ii,a} =\left[ \mS^{ii,a}, J^\eta  (\mS^{ii,a}) \right] +
     \sum_{k: k\neq i}^M \Big( \mS^{ik,a} J^{\eta, q_{ki}} (\mS^{ki,a}) - J^{\eta, q_{ik}} (\mS^{ik,a}) \mS^{ki,a} \Big) +
     }
     \\ \ \\
     \displaystyle{
      + \sum_{b: b\neq a}^n \left[ \mS^{ii,a}, \widetilde{J}^{\eta}_a (\mS^{ii,b}) \right]
      +\sum_{b: b\neq a}^n \sum_{k: k\neq i}^M \Big( \mS^{ik,a} \widetilde{J}^{\eta, q_{ki}}_a (\mS^{ki,b}) - \widetilde{J}^{\eta, q_{ik}}_a (\mS^{ik,b}) \mS^{ki,a}\Big)\,.
      }
\end{array}\end{equation}

Let us formulate the statement on these equations.

\begin{theorem}\label{th1}
Equations of motion \eqref{eqij} and \eqref{eqii} are equal to the Lax equation with additional term
\begin{equation}\begin{array}{c} \displaystyle{
\label{nonLaxrel}
    \frac{d}{dt} \mL(z) = [ \mL(z), \mM(z) ] + \sum_{i, j=1}^M \sum_{c=1}^n \sum_{\al\in\mZ_N^{\times 2}} \mS^{ij, c}_\al (\mu_i - \mu_j) E_{ij} \otimes T_\al f_\al (z-z_c, \om_\al + \frac{q_{ij} + \eta}{N})
}\end{array}\end{equation}
for the Lax pair \eqref{relLaxl} and \eqref{relLaxm}, where
\begin{equation}
\begin{array}{c}
\displaystyle{ \label{cond}
    \mu_i = \frac{\Dot{q}_i}{N} -  \sum_{a=1}^n \mS^{ii,a}_{0,0}, \quad i=1, \ldots, M
}
\end{array}\end{equation}
\end{theorem}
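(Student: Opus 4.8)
The plan is to verify the matrix identity \eqref{nonLaxrel} directly, by computing both sides of the equation block-by-block and comparing the coefficients of $T_\ga\,\vf_\ga(z-z_a,\cdot)$ and of $T_\ga\,f_\ga(z-z_a,\cdot)$, where $f_\ga(z,u)=\partial_u\vf_\ga(z,u)$ denotes the derivative in the second argument. The left-hand side $\frac{d}{dt}\mL(z)$ has two sources of time dependence coming from \eqref{relLaxl}: the spin variables $\mS^{ij,a}_\ga$ and the coordinates $q_i$ entering through $q_{ij}$ in the second argument of $\vf_\ga$. Thus the differentiated block $\dot\mL^{ij}(z)$ splits into a ``spin part'' proportional to $\dot\mS^{ij,a}_\ga\,\vf_\ga(z-z_a,\om_\ga+\frac{q_{ij}+\eta}{N})$ and a ``coordinate part'' proportional to $\frac{\dot q_i-\dot q_j}{N}\,\mS^{ij,a}_\ga\,f_\ga(z-z_a,\om_\ga+\frac{q_{ij}+\eta}{N})$. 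The spin part is to be matched against the quantities $\dot\mS^{ij,a}$ appearing on the left of \eqref{eqij}--\eqref{eqii}, while the coordinate part is already of the form of the extra term in \eqref{nonLaxrel} but with coefficient $\frac{\dot q_i-\dot q_j}{N}$ rather than $(\mu_i-\mu_j)$.

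Next I would compute the block commutator $[\mL(z),\mM(z)]^{ij}=\sum_{k=1}^M\big(\mL^{ik}(z)\mM^{kj}(z)-\mM^{ik}(z)\mL^{kj}(z)\big)$. Each product is a double sum over the lattice $\mZ_N^{\times2}$ and over the poles, so I would use the multiplication rule $T_\al T_\be=\ka_{\al,\be}T_{\al+\be}$ from \eqref{Tcond} together with the addition theorem for the Kronecker-type functions. Two genuinely different cases arise. When the two factors sit at the \emph{same} pole ($a=b$), the product $\vf_\al(z-z_a,\om_\al+\frac{\cdot+\eta}{N})\,\vf_\be(z-z_a,\om_\be+\frac{\cdot}{N})$ is reduced by the genus-one Fay identity \eqref{Fay}; the double pole at $z=z_a$ cancels between $\mL$ and $\mM$, and the resulting finite parts produce the ``local'' operators $J^\eta$ and $J^{\eta,q_{mn}}$ of \eqref{inerten3}--\eqref{inerten4}, whose $E_1(\om_\ga+\frac{\cdot+\eta}{N})-E_1(\om_\ga+\frac{\cdot}{N})$ structure is exactly the regular part generated in this limit, together with the $f_\ga$ contributions. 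When the factors sit at \emph{distinct} poles ($a\neq b$), the addition formula rewrites $\vf_\al(z-z_a,\cdot)\,\vf_\be(z-z_b,\cdot)$ keeping only simple poles at $z=z_a$ and $z=z_b$, and the coefficients are precisely the ``nonlocal'' operators $\widetilde J^\eta_a$ and $\widetilde J^{\eta,q_{mn}}_a$ of \eqref{inerten1}--\eqref{inerten2}, evaluated at $z_{ab}$, together with the scalar constants $E_1(z_{ab})$, $E_1(\frac{\eta}{N})$ and $\phi(z_{ab},\frac{\eta}{N})$ seen in the second line of \eqref{eqij}. I would keep careful track of the special role of the scalar component $T_0=1_N$ in the diagonal blocks, since $\mM^{ii}$ in \eqref{relLaxm} carries the extra shift $E_1(\frac{\eta}{N})$ and the explicit $E_1(z-z_a)$ term.

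With both sides expanded, I would match them as elliptic functions of $z$. Since $\mL(z)$ has only simple poles at $z=z_1,\dots,z_n$, comparing residues $\res_{z=z_a}$ block-by-block yields the equations of motion: the diagonal-block residues give \eqref{eqii} and the off-diagonal-block residues give \eqref{eqij}. The decisive point is the coordinate part. The $f_\ga$ terms generated by the commutator --- through the Fay identity and through the explicit $E_1(z-z_a)$ in $\mM^{ii}$, the latter acting from the right via $\mM^{jj}$ and from the left via $\mM^{ii}$ --- carry coefficients proportional to the scalar diagonal spins, namely $\sum_a(\mS^{ii,a}_{0,0}-\mS^{jj,a}_{0,0})$. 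Subtracting these from the coordinate part $\frac{\dot q_i-\dot q_j}{N}\,f_\ga$ of the left-hand side, the combined $f_\ga$ coefficient collapses to $\frac{\dot q_i-\dot q_j}{N}-\sum_a(\mS^{ii,a}_{0,0}-\mS^{jj,a}_{0,0})=\mu_i-\mu_j$ once $\mu_i$ is defined by \eqref{cond}. This residual $f_\ga$ term is exactly the non-Lax correction in \eqref{nonLaxrel}, and the definition \eqref{cond} is precisely what is required to absorb the $\sum_a\mS^{ii,a}_{0,0}$ pieces into $\dot q_i/N$.

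The main obstacle is the bookkeeping of the elliptic-function identities rather than any conceptual step. In particular, the distinct-pole reduction ($a\neq b$) must be performed so that all higher-order poles and the spurious residual $z$-dependence cancel, leaving only the simple-pole structure with the operators $\widetilde J$ and the constants $E_1(z_{ab})$, $\phi(z_{ab},\frac{\eta}{N})$; this is where the addition theorem and the quasi-periodicity of $\vf_\ga$ must be used most carefully. A second delicate point is the separate treatment of the scalar component $\ga=0$ in the diagonal blocks, which is responsible both for the appearance of $\mu_i$ and for the exact form of the coefficient $E_1(\frac{\eta}{N})+E_1(z_{ab})-\phi(z_{ab},\frac{\eta}{N})$ in \eqref{eqij}. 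Once these identities are assembled, the comparison is mechanical and the theorem follows by reading off the coefficients of each $T_\ga\vf_\ga$ and each $T_\ga f_\ga$.
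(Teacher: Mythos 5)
Your proposal is correct and follows essentially the same route as the paper's proof: expand both sides block-by-block in the basis $T_\ga\vf_\ga$, $T_\ga f_\ga$, treat the same-pole ($a=b$) terms via the Fay identity and its degenerations (producing the operators $J^\eta$, $J^{\eta,q_{mn}}$) and the distinct-pole ($a\neq b$) terms via the addition formula (producing $\widetilde J^\eta_a$, $\widetilde J^{\eta,q_{mn}}_a$), and observe that the leftover $f_\ga$ coefficients collapse to $\mu_i-\mu_j$ exactly by the definition \eqref{cond}. The only cosmetic differences are that the paper imports the $a=b$ computation from \cite{Reltops} rather than redoing it, and matches coefficients of the independent functions directly rather than phrasing the $\vf_\ga$ part as a residue comparison.
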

and the functions $f_\al$ in the additional term in the r.h.s. of (\ref{nonLaxrel}) are given by (\ref{varf}).

\begin{proof}
Consider first the equation (\ref{nonLaxrel}) for the diagonal blocks. For the l.h.s. we have
\begin{equation}
\label{leftii}
    \frac{d}{dt} \mathcal{L}^{ii}(z) =  \sum_{a=1}^n \sum_\gamma T_\gamma \Dot{\mathcal{S}}^{ii,a}_\gamma \varphi_\gamma (z-z_a, \omega_\gamma + \frac{\eta}{N})\,,
\end{equation}
and for the r.h.s.
\begin{equation}\label{w60} \begin{array}{c}
     \displaystyle{\left[ \mathcal{L}, \mathcal{M} \right]^{ii}  = \sum_{a,b=1}^n \sum_{\gamma \neq 0} \sum_\beta ( \kappa_{\gamma, \beta} - \kappa_{\beta, \gamma} ) \mS^{ii,a}_\gamma \mS^{ii,b}_\beta T_{\gamma + \beta} \varphi_\gamma (z-z_a, \omega_\gamma) \varphi_\beta (z-z_b, \omega_\beta + \frac{\eta}{N}) + }
     \\ \ \\
     \displaystyle{ + \sum_{k: k\neq i}^M \sum_{a,b=1}^n \sum_{\gamma, \beta} \kappa_{\gamma, \beta} \mS_\gamma^{ik, a} \mS_\beta^{ki, b} T_{\gamma + \beta} \Big( \varphi_\gamma (z-z_a, \omega_\gamma + \frac{q_{ik}}{N}) \varphi_\beta (z-z_b, \omega_\beta + \frac{q_{ki} + \eta}{N}) - }
     \\ \ \\
    \displaystyle{ - \varphi_\gamma (z-z_a, \omega_\gamma + \frac{q_{ik} + \eta}{N}) \varphi_\beta (z-z_b, \omega_\beta + \frac{q_{ki}}{N} ) \Big).}
\end{array} \end{equation}
The first sum (the upper line) consists of two parts: $a=b$ and $a \neq b$. Consideration of the case $a=b$ coincides with the one described in \cite{Reltops}. That gives us first two terms  in the equation \eqref{eqii} (also the upper line in the r.h.s.). Consider the case $a\neq b$. By applying the Fay identity (\ref{Fay}) and \eqref{a16}, \eqref{a17} to the first term in the r.h.s. of  (\ref{w60}) we obtain:
\begin{equation} \begin{array}{c}
        \displaystyle{\sum_{a\neq b}^n \sum_\gamma \sum_{\beta \neq 0} \mS_\gamma^{ii, a} \mS_\beta^{ii, b} T_\gamma T_\beta \varphi_{\gamma + \beta}(z - z_a, \omega_{\gamma + \beta} + \frac{\eta}{N}) \Big( \varphi_\beta (z_{ab}, \omega_\beta + \frac{\eta}{N}) - \varphi_\beta(z_{ab}, \omega_\beta) \Big) - }
        \\ \ \\
        \displaystyle{- \sum_{a\neq b}^n \sum_\gamma \sum_{\beta \neq 0} \mathcal{S}_\gamma^{ii, a} \mathcal{S}_\beta^{ii, b} T_\beta T_\gamma \varphi_{\gamma + \beta}(z - z_a, \omega_{\gamma + \beta} + \frac{\eta}{N}) \Big(  \varphi_\beta (z_{ab}, \omega_\beta + \frac{\eta}{N}) - \varphi_\beta (z_{ab}, \omega_\beta)\Big)\,.}
\end{array} \end{equation}
Similarly, for the second term (the second line) in the r.h.s. of (\ref{w60}):
\begin{equation} \begin{array}{c}
    \displaystyle{\sum_{k\neq i}^M\sum_{a\neq b}^n \sum_{\gamma, \beta} \mS^{ik, a}_\gamma \mS^{ki, b}_\beta T_{\gamma} T_{\beta} \varphi_{\gamma + \beta} (z-z_a, \omega_{\gamma +\beta} + \frac{\eta}{N}) \Big( \varphi_\beta (z_{ab}, \omega_\beta + \frac{q_{ki} + \eta}{N}) - \varphi_\beta (z_{ab}, \omega_\beta + \frac{q_{ki}}{N}) \Big) - }\\ \ \\
    \displaystyle{-  \mS^{ik, b}_\beta \mS^{ki, a}_\gamma T_{\beta} T_{\gamma} \Big( \varphi_\beta (z_{ab}, \omega_\beta + \frac{q_{ik} + \eta}{N}) - \varphi_\beta (z_{ab}, \omega_\beta + \frac{q_{ik}}{N}) \Big) \varphi_{\gamma + \beta} (z-z_a, \omega_{\gamma +\beta} + \frac{\eta}{N})}\,.
\end{array} \end{equation}
In this way we obtain the r.h.s. of (\ref{w60}). By comparing it with \eqref{leftii} one gets \eqref{eqii}.

For the non-diagonal blocks of the equation (\ref{nonLaxrel}) we have in the l.h.s.:
\begin{equation}\label{w61}
    \frac{d}{dt} \mathcal{L}^{ij}(z) = \sum_{a=1}^n \sum_\gamma
     \Big(\Dot{\mS}^{ij,a}_\gamma T_\gamma \varphi_\gamma (z-z_a, \omega_\gamma + \frac{q_{ij} + \eta}{N}) + \frac{\Dot{q}_{ij}}{N} \mS^{ij,a}_\gamma T_\gamma f_\gamma (z-z_a, \omega_\gamma + \frac{q_{ij} + \eta}{N})\Big)\,.
\end{equation}
In the r.h.s. of (\ref{nonLaxrel}) the following expression arises:
\begin{equation} \begin{array}{c}\label{w62}
    \displaystyle{
    \sum_{a,b=1}^n \sum_\gamma \mS^{ij,b}_\gamma (\mS^{ii,a}_{0,0} - \mS^{jj,a}_{0,0}) T_\gamma \varphi_\gamma(z - z_b, \omega_\gamma + \frac{ q_{ij} + \eta}{N}) \Big(E_1 (z-z_a) + E_1 (\frac{\eta}{N})\Big) }
     \\ \ \\
    \displaystyle{+ \sum_{a,b=1}^n \sum_{\substack{\gamma \neq 0 \\ \beta}} \mS^{ij,b}_\beta (\kappa_{\gamma, \beta} \mS^{ii,a}_\gamma - \kappa_{\beta, \gamma} \mS^{jj,a}_\gamma ) T_{\gamma + \beta} \varphi_\gamma (z-z_a, \omega_\gamma) \varphi_\beta (z-z_b, \omega_\beta + \frac{q_{ij} + \eta}{N}) + }
     \\ \ \\
    \displaystyle{+ \sum_{a,b=1}^n \sum_{\gamma, \beta} \mS^{ij,a}_\gamma (\kappa_{\gamma, \beta} \mS^{jj,b}_\beta - \kappa_{\beta, \gamma} \mS^{ii,b}_\beta ) T_{\gamma + \beta} \varphi_\gamma (z-z_a, \omega_\gamma + \frac{q_{ij}}{N}) \varphi_\beta (z-z_b, \omega_\beta + \frac{\eta}{N}) + }
    \end{array} \end{equation}
    $$
    \begin{array}{c}
    \displaystyle{+ \sum_{k: k\neq i,j}^M \sum_{a,b=1}^n \sum_{\gamma, \beta} T_\gamma T_\beta \mathcal{S}_\gamma^{ik, a} \mathcal{S}_\beta^{kj, b} \Big( \varphi_\gamma (z-z_a, \omega_\gamma + \frac{q_{ik}}{N}) \varphi_\beta (z-z_b, \omega_\beta + \frac{q_{kj} + \eta}{N})-}
     \\ \ \\
    \displaystyle{- \varphi_\gamma (z-z_a, \omega_\gamma + \frac{q_{ik} + \eta}{N}) \varphi_\beta (z-z_b, \omega_\beta + \frac{q_{kj}}{N}) \Big) +}
     \\ \ \\
    \displaystyle{+ \sum_{a=1}^n \sum_\gamma \mathcal S^{ij, a}_\gamma (\mu_i - \mu_j) T_\gamma f_\gamma (z-z_a, \omega_\gamma + \frac{q_{ij} + \eta}{N}).}
    \end{array}
    $$
It includes $\left[ \mathcal{L}, \mathcal{M} \right]^{ij}$ and the additional term (in the last line).
Again, the part of the terms with $a=b$ in all sums was derived in \cite{Reltops}. This part provides the upper line of \eqref{eqij}.
The term with the summation over $k$ is transformed through (\ref{Fay}). Then the expression under this sum takes the form:
\begin{equation} \begin{array}{c}
    \displaystyle{
    \mS^{ik, a}_\gamma \mS^{kj, b}_\beta T_{\gamma} T_{\beta} \varphi_{\gamma + \beta} (z-z_a, \omega_{\gamma +\beta} + \frac{q_{ij} + \eta}{N}) \Big( \varphi_\beta (z_{ab}, \omega_\beta + \frac{q_{kj} + \eta}{N}) - \varphi_\beta (z_{ab}, \omega_\beta + \frac{q_{kj}}{N})\Big) +
    }
     \\ \ \\
    \displaystyle{
     + \mS^{ik, b}_\beta \mS^{kj, a}_\gamma T_{\beta} T_{\gamma} \varphi_{\gamma + \beta} (z-z_a, \omega_{\gamma +\beta} + \frac{q_{ij} + \eta}{N}) \Big( \varphi_\beta (z_{ab}, \omega_\beta + \frac{q_{ik}}{N}) - \varphi_\beta (z_{ab}, \omega_\beta + \frac{q_{ik} + \eta}{N})\Big)\,.
    }
\end{array} \end{equation}
All computations are similar to the diagonal case, and in this way we get the last two terms in \eqref{eqij}. At the same time we also get additional terms, which are as follows:
\begin{equation} \begin{array}{c}
    \displaystyle{
    \sum_\gamma \sum_{a, b} \mS^{ij, a}_\gamma ( \mS^{ii,b}_{0,0} - \mS^{jj,b}_{0,0} ) T_\gamma \varphi_\gamma (z - z_a, \omega_\gamma + \frac{q_{ij} + \eta}{N}) ( E_1 (\frac{\eta}{N}) + E_1 (z_{ab}) - \phi(z_{ab}, \frac{\eta}{N})) +
    }
    \\ \ \\
    \displaystyle{
     + \sum_\gamma \sum_{a, b} \mS^{ij, a}_\gamma ( \mS^{ii,b}_{0,0} - \mS^{jj,b}_{0,0} ) T_\gamma f_\gamma (z - z_a, \omega_\gamma + \frac{q_{ij} + \eta}{N}).
     }
\end{array} \end{equation}
The upper expression provides the middle line in the equation (\ref{eqij}).
And the last one expression is cancelled together with the last terms in (\ref{w61}) and (\ref{w62}) by the definition \eqref{cond}.
\end{proof}

Finally, we mention that the Lax equation holds true on the constraints
  \beq\label{w64}
  \begin{array}{c}
    \displaystyle{
 \mu_i=0\,,\quad i=1,...,M\,,
 }
 \end{array}
 \eq
 where $\mu_i$ are given by (\ref{cond}). In this case the additional term in (\ref{nonLaxrel}) vanishes.
 By differentiating (\ref{w64}) with respect to time variable we find equations of motion for
 the positions of particles:
  \beq\label{w641}
  \begin{array}{c}
    \displaystyle{
 {\ddot q}_i=N\sum\limits_{a=1}^n \dot\mS^{ii,a}_{0,0}=\sum\limits_{a=1}^n \tr(\dot\mS^{ii,a})\,,\quad i=1,...,M\,,
 }
 \end{array}
 \eq
 Summing up (in $a$) equations (\ref{eqii}) and taking trace of both sides we find:
\begin{equation}\begin{array}{c}
\label{w642}
    \displaystyle{
    {\ddot q}_i =
     \sum\limits_{a=1}^n\sum_{k: k\neq i}^M \tr\Big( \mS^{ik,a} J^{\eta, q_{ki}} (\mS^{ki,a}) - J^{\eta, q_{ik}} (\mS^{ik,a}) \mS^{ki,a} \Big) +
     }
     \\ \ \\
     \displaystyle{
      +\sum_{k: k\neq i}^M \sum_{a,b: b\neq a}^n \tr\Big( \mS^{ik,a} \widetilde{J}^{\eta, q_{ki}}_a (\mS^{ki,b}) - \widetilde{J}^{\eta, q_{ik}}_a (\mS^{ik,b}) \mS^{ki,a}\Big)\,.
      }
\end{array}\end{equation}

 The constraints (\ref{cond}) should be also supplied with $M$ gauge fixation conditions thus performing (the Hamiltonian or the Poisson) reduction to the phase space of integrable model. The reduction is not only restriction of equations (\ref{eqij})-(\ref{eqii}) to the level of constraints (\ref{w64}) but provides some additional terms in the equations through the Dirac brackets formula. The same phenomenon takes place in the non-relativistic models. For example,
 the spin Calogero-Moser model with the spin variables from the minimal coadjoint orbit is reduced in this way to the spinless system (the model 9 on the Scheme 1).

\subsection{Particular cases}

\paragraph{Classical spin chain.} First of all we mention that in the $M=1$ case the Lax matrix (\ref{w56}), (\ref{relLaxl})
is simplified to the following one
  \beq\label{w644}
  \begin{array}{c}
    \displaystyle{
\mL(z) =
     \sum\limits_{a=1}^n\sum_{\gamma\in\mZ_N^{\times 2}}   T_\gamma
     \mS^{ij,a}_\gamma \vf_\gamma(z-z_a, \om_\gamma + \frac{\eta}{N})\in{\rm Mat}(N,\mC)\,,
 }
 \end{array}
 \eq
 which is the monodromy matrix (\ref{w41}) of the classical chain in the additive form (with the redefinition $\eta\rightarrow \eta/n$).

\paragraph{${\rm GL}_{NM}$ model.}
When $n=1$ we have a single pole, which can be fixed as $z_1=0$. Then the Lax matrix (\ref{w56}), (\ref{relLaxl})
turns into
  \beq\label{w65}
  \begin{array}{c}
    \displaystyle{
\mL(z) =
     \sum\limits_{i,j=1}^M\sum_{\gamma\in\mZ_N^{\times 2}}  E_{ij}\otimes T_\gamma
     \mS^{ij}_\gamma \vf_\gamma(z, \om_\gamma + \frac{q_{ij} + \eta}{N})\in{\rm Mat}(NM,\mC)\,.
 }
 \end{array}
 \eq
Detailed description of this model (it is the model 4 on the Scheme 2) can be found in \cite{Reltops}.

Let us also mention several important particular cases of the ${\rm GL}_{NM}$ model itself. The first one is the model of $M$ interacting relativistic ${\rm GL}_N$ tops, which appears in the case ${\rm rank}(\mS)=1$. Recently a quantum version of this model was proposed in \cite{MZ}, and
related q-deformed long-rage spin chains were described. The second particular case is the relativistic top (\ref{w26}), which comes from (\ref{w65}) in the $M=1$ case. Finally, the third case is the spin Ruijsenaars-Schneider model \cite{KrichZ}, which corresponds to $N=1$. In our notation it is also briefly reviewed in the beginning of
\cite{Reltops}. If the matrix of spin variables has rank one then the reduction discussed in the end of the previous subsection kills all spin degrees of freedom and the spinless Ruijsenaars-Schneider model arises (see details in \cite{KrichZ}). All these
relations are shown on the Scheme 2.

\paragraph{Multispin Ruijsenaars model.} In the $N=1$ case the Lax matrix (\ref{w56}), (\ref{relLaxl}) becomes
the one for the multispin ${\rm GL}_M$ Ruijsenaars-Schneider model:
  \beq\label{w66}
  \begin{array}{c}
    \displaystyle{
\mL(z) =
     \sum\limits_{a=1}^n\sum\limits_{i,j=1}^M  E_{ij}
     \mS^{ij,a} \phi(z-z_a, q_{ij} + \eta)\in{\rm Mat}(M,\mC)\,.
 }
 \end{array}
 \eq
Similarly to transition between (\ref{w41}) and (\ref{w44}) the Lax matrix (\ref{w66}) can be transformed
to the form, which has no explicit dependence on the variable $\eta$\footnote{In order to get (\ref{w67}) from (\ref{w66}) one should divide $\mL(z)$ by function $\phi(z-x,\eta)$ for some $x$ and then represent the answer as in (\ref{w67}).} (see also the footnote to (\ref{w207})):
  \beq\label{w67}
  \begin{array}{c}
    \displaystyle{
\bar\mL_{ij}(z) =\delta_{ij}\Big(\bar\mS^{ii}+\sum\limits_{a=1}^{n'}\bar\mS^{ii,a}E_1(z-z_a)\Big)+
     (1-\delta_{ij})\sum\limits_{a=1}^{n'}
     \bar\mS^{ij,a} \phi(z-z_a, q_{ij})\,.
 }
 \end{array}
 \eq
 %
 This form is known for non-relativistic
multispin Calogero-Moser model introduced in \cite{N}. Also, this form was used in \cite{MMZ} for the multispin Ruijsenaars model. Although the form is non-relativistic, the Poisson structure is quadratic and complicated.

In fact, the Hamiltonian description is unknown even for $n=1$ case (the elliptic spin Ruijsenaars-Schneider model), but it is known for
$n=1$ and ${\rm rank}(\mS)=1$ since it is the spinless Ruijsenaars-Schneider model due to the additional reduction.
In the next Section we describe explicit parametrization in canonical variables of the model (\ref{w66}) with spin variables satisfying the property ${\rm rank}(\mS^k)=1$ for all $k=1,...,n$.


\section{Inhomogeneous Ruijsenaars chain}
\label{sec4}
\setcounter{equation}{0}

The Ruijsenaars chain on $n$ sites is the model introduced recently in \cite{ZZ}. Similarly to (\ref{w37}) it is described by $N\times N$
monodromy matrix
  \beq\label{e01}
  \begin{array}{c}
    \displaystyle{
 {\ti T}(z)={\ti L}^1(z)\ldots {\ti L}^n(z)\in\Mat\,,
 }
 \end{array}
 \eq
 where the Lax matrices ${\ti L}^1(z)$ are of the form
 \beq\label{e02}
 \begin{array}{c}
  \displaystyle{
 {\ti L}^k_{ij}(z)=
 \phi(z,{\bar q}^{k-1}_i-{\bar q}^{k}_j+\eta)
 \frac{\prod\limits_{l=1}^N\vth({\bar q}^k_j-{\bar q}^{k-1}_l-\eta) }
 {\vth(-\eta)\prod\limits_{l: l\neq j}^N\vth({\bar q}^{k}_j-{\bar q}^{k}_l) }\,e^{p^k_j/c}\,,
 }
 \end{array}
 \eq
 where $k=1,...,n$ and $i,j=1,...,N$. We deal here with $nN$ pairs of canonical variables:
 \beq\label{e03}
 \begin{array}{c}
  \displaystyle{
\{p_i^k,q_j^l\}=\delta^{kl}\delta_{ij}\,,\qquad
\{p_i^k,p_j^l\}=\{q_i^k,q_j^l\}=0\,,
 }
 \\ \ \\
  \displaystyle{
 i,j=1,...,N\,,\quad k,l=1,...,n\,.
 }
 \end{array}
 \eq
 Also, the cyclic identification $q^0_i=q^n_i$ is assumed. By construction, the transfer matrix ${\ti t}(z)=\tr {\ti T}(z)$ satisfies the involution property $\{{\ti t}(z),{\ti t}(w)\}=0$, thus providing Poisson commuting Hamiltonians.
 Likewise it happens in the classical homogeneous spin chains \cite{FTbook}, there is a flow in this model describing interaction of neighbour sites only. In continuous limit one obtains the local integrable 1+1 field theory -- the field generalization of the Ruijsenaars-Schneider model. See \cite{ZZ} for details.

 The model (\ref{e01}) is homogeneous, i.e. we put all $z_k=0$ in (\ref{w37}).
 In this subsection we describe a natural generalization of the above model to the inhomogeneous case. Then
 we discuss its relation to multispin Ruijsenaars model.

\subsection{Inhomogeneous Ruijsenaars chain}
The derivation of (\ref{e01})-(\ref{e02}) was based on consideration of the classical ${\rm GL}_N$ spin chain in the special case when all matrices of spin variables $S^1,...,S^n$ are of rank one. In this case the Lax matrices at each site are represented in the factorized form (\ref{w52}). Let us follow the same strategy in the
inhomogeneous case.

\paragraph{Lax matrices.} Consider the monodromy matrix (\ref{w37}) (with $\eta\rightarrow N\eta$)
  \beq\label{e031}
  \begin{array}{l}
  \displaystyle{
 T(z)=L^{N\eta}(S^1,z-z_1)L^{N\eta}(S^2,z-z_2)\ldots L^{N\eta}(S^n,z-z_n)
 }
 \end{array}
 \eq
and write all the Lax matrices in the form:
  \beq\label{e04}
  \begin{array}{l}
  \displaystyle{
 L^{N\eta}(S^k,z-z_k)=\frac{\vth'(0)}{\vth(\eta)}\,
 g(z-z_k+N\eta,q^k)\,e^{P^k/c}g^{-1}(z-z_k,q^k)\,.
 }
 \end{array}
 \eq
As we know from (\ref{w53}) the residue
  \beq\label{e05}
  \begin{array}{l}
  \displaystyle{
 S^k=\res\limits_{z=z_k}L^{N\eta}(S^k,z-z_k)
 }
 \end{array}
 \eq
 is explicitly expressed in terms of canonical variables $q^k_1,...,q^k_N$, $p_1^k,...,p_n^k$ and satisfy the
 classical Sklyanin algebra relations (\ref{w29}). Due to (\ref{e04}) the matrices $S^k$ have rank one (see \cite{VZ,ZZ}). Plugging (\ref{e05}) into the monodromy matrix (\ref{w37}) one gets
  \beq\label{e06}
  \begin{array}{l}
  \displaystyle{
 T(z)=\Big(\frac{\vth'(0)}{\vth(\eta)}\Big)^n
 g(z-z_1+N\eta,q^1)\,e^{P^1/c}g^{-1}(z-z_1,q^1)\ldots g(z-z_n+N\eta,q^n)\,e^{P^n/c}g^{-1}(z-z_n,q^n)\,.
 }
 \end{array}
 \eq
 Next, consider the gauge transformed monodromy matrix
  \beq\label{e07}
  \begin{array}{c}
  \displaystyle{
 {\ti T}(z)=G^{-1} T(z) G=
  }
  \\ \ \\
  \displaystyle{
 =\Big(\frac{\vth'(0)}{\vth(\eta)}\Big)^n
 g^{-1}(z-z_1,q^1)g(z-z_2+N\eta,q^2)\,e^{P^2/c}\ldots g^{-1}(z-z_n,q^n)g(z-z_1+N\eta,q^1)\,e^{P^1/c}\,,
 }
 \end{array}
 \eq
 where
  \beq\label{e071}
  \begin{array}{l}
  \displaystyle{
 G=g(z-z_1+N\eta,q^1)\,e^{P^1/c}\,.
 }
 \end{array}
 \eq
In this way we come to
  \beq\label{e072}
  \begin{array}{l}
  \displaystyle{
 {\ti T}(z)={\mathbb L}^1(z-z_1){\mathbb L}^2(z-z_2)\ldots {\mathbb L}^n(z-z_n)
 }
 \end{array}
 \eq
 with
  \beq\label{e08}
  \begin{array}{l}
  \displaystyle{
 {\mathbb L}^k(z-z_k)=\frac{\vth'(0)}{\vth(\eta)}\,g^{-1}(z-z_{k},q^{k})g(z-z_{k+1}+N\eta,q^{k+1})\,e^{P^{k+1}/c}\,.
 }
 \end{array}
 \eq
 Now we are in position to calculate (\ref{e08}). For this purpose we use the following formula proved in \cite{Hasegawa}:
 \beq\label{e09}
 \begin{array}{c}
  \displaystyle{
 \Big( -\vth'(0)\, g^{-1}(z,{ q}^{m})g(z+N\eta,{ q}^k)\Big)_{ij}=
 \phi(z,{\bar q}^{m}_i-{\bar q}^{k}_j+\eta)\,
 \frac{\prod\limits_{l=1}^N\vth({\bar q}^k_j-{\bar q}^{m}_l-\eta) }
 {\prod\limits_{l: l\neq j}^N\vth({\bar q}^{k}_j-{\bar q}^{k}_l) }\,.
 }
 \end{array}
 \eq
 By writing $g(z-z_{k+1}+N\eta,q^{k+1})$ as $g(z-z_{k}+N(\eta-\frac{z_{k+1}-z_{k}}{N}),q^{k+1})$
 and using (\ref{e09}) we obtain
  \beq\label{e10}
  \begin{array}{l}
  \displaystyle{
 {\mathbb L}^k_{ij}(z-z_k)=
 \phi\Big(z-z_k,{\bar q}^{k}_i-{\bar q}^{k+1}_j+\eta-\frac{z_{k+1}-z_{k}}{N}\Big)
 \frac{\prod\limits_{l=1}^N\vth({\bar q}^{k+1}_j-{\bar q}^{k}_l+\frac{z_{k+1}-z_{k}}{N}-\eta) }
 {\vth(\frac{z_{k+1}-z_{k}}{N}-\eta)\prod\limits_{l: l\neq j}^N\vth({\bar q}^{k+1}_j-{\bar q}^{k+1}_l) }\,e^{p^{k+1}_j/c}\,.
 }
 \end{array}
 \eq
 Notice that enumeration of Lax matrices shifted by 1 with respect to the one used in the homogeneous case (\ref{e02}), i.e. when $z_1=...=z_n=0$ ${\mathbb L}^k(z-z_k)$ turns into ${\ti L}^{k+1}(z)$.
 This is just in order to achieve matching with numeration of poles $z_1,...,z_n$. In fact, it is more properly	to enumerate ${\mathbb L}$ (and $\ti L$) by two neighbour indices since each Lax matrix depends on two sets of variables $q$ with neighbour values of upper indices.

It is possible to slightly simplify expression (\ref{e10}) by introducing the variables
 \beq\label{e11}
 \begin{array}{c}
  \displaystyle{
 {\breve q}^k_i= {\bar q}^k_i+\frac{z_k}{N}\,,\quad k=1\ldots n\,,\quad i=1\ldots N\,.
 }
 \end{array}
 \eq
 Then the Lax matrix (\ref{e10}) takes the form\footnote{For $n\geq 2$ one can make an additional shift
  ${\breve q}^k_i\rightarrow {\breve q}^k_i +k\eta$, which removes explicit dependence on $\eta$ in (\ref{e10}), (\ref{e12}) except
  the theta-function $\vth((z_{k+1}-z_{k})/{N}-\eta)$ in denominator. }:
 \beq\label{e12}
 \begin{array}{c}
  \displaystyle{
 {\mathbb L}^k_{ij}(z-z_k)=
 \phi(z-z_k,{\breve q}^{k}_i-{\breve q}^{k+1}_j+\eta)
 \frac{\prod\limits_{l=1}^N\vth({\breve q}^{k+1}_j-{\breve q}^{k}_l-\eta) }
 {\vth(\frac{z_{k+1}-z_{k}}{N}-\eta)\prod\limits_{l: l\neq j}^N\vth({\breve q}^{k+1}_j-{\breve q}^{k+1}_l) }\,e^{p^{k+1}_j/c}\,.
 }
 \end{array}
 \eq
 There is the following relation between (\ref{e12}) and (\ref{e02}):
 \beq\label{e13}
 \begin{array}{c}
  \displaystyle{
 {\mathbb L}^k_{ij}(z-z_k)=\frac{\vth(-\eta)}{\vth(\frac{z_{k+1}-z_{k}}{N}-\eta)}\,{\bar L}^{k+1}(z-z_k)\Big|_{{\bar q}\rightarrow {\breve q}}\,.
 }
 \end{array}
 \eq

\paragraph{Non-local Hamiltonians.} Recall that by construction, the monodromy matrix (\ref{e072})
is gauge equivalent to the one for XYZ spin chain (\ref{e031}) in the special case (\ref{e04}). Therefore,
 \beq\label{e14}
 \begin{array}{c}
  \displaystyle{
 \tr {\ti T}(z)=\tr T(z)\,,
 }
 \end{array}
 \eq
 and the non-local Hamiltonians for the inhomogeneous Ruijsenaars chain
 \beq\label{e15}
 \begin{array}{c}
  \displaystyle{
 {\ti H}_i=\res\limits_{z=z_i}\tr {\ti T}(z)
 }
 \end{array}
 \eq
 are precisely the same as in the XYZ spin chain, see (\ref{w421}). It is possible to write these Hamiltonians more explicitly using (\ref{e12}). Introduce notations for $N\times N$ matrix (of Cauchy type)
 \beq\label{e16}
 \begin{array}{c}
  \displaystyle{
 C_{ij}(z,x,y)=\phi(z,x_i-y_j+\eta)\,,\quad i,j=1,...,N
 }
 \end{array}
 \eq
 and a set of $n$ diagonal $N\times N$ matrices
 \beq\label{e17}
 \begin{array}{c}
  \displaystyle{
 B^{[k,k+1]}_{ij}=\delta_{ij}\frac{\prod\limits_{l=1}^N\vth({\breve q}^{k+1}_j-{\breve q}^{k}_l-\eta) }
 {\vth(\frac{z_{k+1}-z_{k}}{N}-\eta)\prod\limits_{l: l\neq j}^N\vth({\breve q}^{k+1}_j-{\breve q}^{k+1}_l) }\,e^{p^{k+1}_j/c}\,,\quad k=1,...,n\,,\quad i,j=1,...,N\,.
 }
 \end{array}
 \eq
 Then from (\ref{e072}) and (\ref{e12}) we conclude that
 \beq\label{e18}
 \begin{array}{c}
  \displaystyle{
 {\ti T}(z)=
 }
 \\ \ \\
   \displaystyle{
 =C(z-z_1,q^1,q^2)B^{[1,2]}C(z-z_2,q^2,q^3)B^{[2,3]}\ldots
 C(z-z_{n-1},q^{n-1},q^n)B^{[n-1,n]} C(z-z_n,q^n,q^1)B^{[n,1]}\,.
 }
 \end{array}
 \eq
 Due to (\ref{philimits}) we have
 \beq\label{e19}
 \begin{array}{c}
  \displaystyle{
 \res\limits_{z=z_i}C(z-z_i,x,y)=\rho\otimes\rho^T\,,\quad \rho=(1,1,...,1)^T\,,
 }
 \end{array}
 \eq
 where $\rho$ is a column-vector of units.
 Thus,
 \beq\label{e20}
 \begin{array}{c}
  \displaystyle{
 \res\limits_{z=z_i}{\ti T}(z)= C(z_i-z_1,q^1,q^2)B^{[1,2]}\ldots  C(z_i-z_{i-1},q^{i-1},q^i)B^{[i-1,i]}\times
 }
 \\ \ \\
   \displaystyle{
 \times\rho\otimes\rho^T B^{[i,i+1]} C(z_i-z_{i+1},q^{i+1},q^{i+2}) B^{[i+1,i+2]}\ldots C(z_i-z_n,q^n,q^1)B^{[n,1]}\,.
 }
 \end{array}
 \eq
After taking the trace, we finally get
 \beq\label{e21}
 \begin{array}{c}
  \displaystyle{
 {\ti H}_i=\rho^T B^{[i,i+1]} C(z_i-z_{i+1},q^{i+1},q^{i+2}) B^{[i+1,i+2]}\ldots C(z_i-z_n,q^n,q^1)B^{[n,1]}\times
 }
 \\ \ \\
   \displaystyle{
 \times
 C(z_i-z_1,q^1,q^2)B^{[1,2]}\ldots  C(z_i-z_{i-1},q^{i-1},q^i)B^{[i-1,i]}
 \rho\,.
 }
 \end{array}
 \eq

\subsection{Relation to multispin Ruijsenaars model}

Let us represent the monodromy matrix ${\ti T}(z)$ (\ref{e072}), (\ref{e12}) in the additive form similarly to what we did for spin chains, see (\ref{w37}), (\ref{w41}).

Due to (\ref{percond}) the Lax matrices (\ref{e12}) have the following quasi-periodic properties on the lattice of elliptic curve:
 \beq\label{e22}
 \begin{array}{c}
  \displaystyle{
 {\mathbb L}^k(z+1)={\mathbb L}^k(z)\,,
 }
 \\ \ \\
   \displaystyle{
 {\mathbb L}^k(z+\tau)=e^{-2\pi\imath\eta} H_k^{-1}{\mathbb L}^k(z)H_{k+1}\,,
 }
 \end{array}
 \eq
where $H_k\in\Mat$, $k=1,...,n$ are diagonal matrices
 \beq\label{e23}
 \begin{array}{c}
  \displaystyle{
  H_k={\rm diag}(e^{2\pi\imath q_1^k},...,e^{2\pi\imath q_N^k})\,.
 }
 \end{array}
 \eq
Therefore, for the monodromy matrix (\ref{e072}) we have
 \beq\label{e24}
 \begin{array}{c}
  \displaystyle{
 {\ti T}(z+1)={\ti T}(z)\,,
 }
 \\ \ \\
   \displaystyle{
 {\ti T}(z+\tau)=e^{-2\pi\imath n\eta} H_1^{-1} {\ti T}(z) H_{1}\,.
 }
 \end{array}
 \eq
The matrix ${\ti T}(z)$ has simple poles at $z_i$, $i=1,...,n$ with the residues given by the r.h.s. of (\ref{e20}).
Let us denote these residues as
 \beq\label{e25}
 \begin{array}{c}
  \displaystyle{
  \mS^i=\res\limits_{z=z_i}{\ti T}(z)\,.
 }
 \end{array}
 \eq
 In this way ${\ti T}(z)$ is fixed, and we can write it explicitly:
 \beq\label{e26}
 \begin{array}{c}
  \displaystyle{
  {\ti T}_{ij}(z)=\sum\limits_{k=1}^n \mS^k_{ij}\phi(z-z_k,q^1_i-q^1_j+n\eta)\,.
 }
 \end{array}
 \eq
 It is a general form of the ${\rm GL}_N$ multispin Ruijsenaars-Schneider model (\ref{w66}) with the substitution $\eta\rightarrow n\eta$.  Here we deal with
 a special set of the spin variables $\mS^i$ (\ref{e25}), (\ref{e20}), which are rank one matrices due to the presence
 of $\rho\otimes \rho^T$ in the product (\ref{e20}). Positions of particles are
 $q^1_1,...,q^1_N$, and the spin variables are parameterized by the rest of variables (i.e. by $q^k_1,...,q^k_N$ for $k=2,...,n$ and $p^m_1,...,p^m_N$, $m=1,...,n$).

 In $n=1$ case the obtained result (i.e. (\ref{e25})-(\ref{e26}) with (\ref{e20})) reproduces relation between
 the spinless Ruijsenaars-Schneider model (\ref{w45}) and the  relativistic top (\ref{w521}). The latter means that
 we obtained the parametrization of the reduced model (here we mean the reduction discussed in the end of subsection \ref{sec31}).


\section{Generalized model through $R$-matrix formulation}
\label{sec5}
\setcounter{equation}{0}

In this Section we extend the formulation of the most general ${\rm GL}_{NM}$ model presented in Section \ref{sec3}.
For example, it was mentioned in Section \ref{sec2} that the Lax matrix (\ref{w26}) of the relativistic top
is equivalently written in the form (\ref{w261}) with the Baxter-Belavin $R$-matrix (\ref{BB}). Being written in the $R$-matrix form
the model (\ref{w261}) is an extension of (\ref{w26}) since not only the elliptic $R$-matrix (\ref{BB}) can be used, but also any trigonometric or rational degeneration satisfying the associative Yang-Baxter equation with certain additional properties, which are the classical limit (\ref{serRz}), the unitarity (\ref{unitarity}) and the skew-symmetry (\ref{w031}). As a result, one obtains a more universal formulation of the Lax pair (\ref{w261}), (\ref{w35}) and the equations of motion (\ref{w31}), (\ref{w36}). Then the models are enumerated by possible $R$-matrices from a special but a wide class, which was briefly reviewed in our previous paper \cite{TZ}, and we do not repeat it here. The calculations below are based on the above mentioned $R$-matrix properties and identities
from the Appendix B.

Introduce the following  Lax pair:
\begin{equation}\begin{array}{c} \displaystyle{
    \mL(z) = \sum_{i,j=1}^M E_{ij} \otimes \mL^{ij}(z) \in \text{Mat}(NM, \mathbb{C}), \quad \mL^{ij}(z) \in \text{Mat}(N, \mC).} \\ \ \\
    \displaystyle{ \mL^{ij}(z) = \sum_{a=1}^n \text{tr}_2 ( R^{z-z_a}_{12} (q_{ij} + \eta) P_{12} S^{ij,a}_2 ),\quad q_{ij} = q_i - q_j}
\end{array}\end{equation}
and
\begin{equation}
\begin{array}{c}
  \displaystyle{
    \mM(z) = \sum_{i,j=1}^M E_{ij} \otimes \mM^{ij}(z) \in \text{Mat}(NM, \mathbb{C}), \quad \mM^{ij}(z) \in \text{Mat}(N, \mC)\,,}
    \\ \ \\
    \displaystyle{ \mM^{ij}(z) = - \delta_{ij} \sum_{a=1}^n \text{tr}_2 ( R^{z-z_a, (0)}_{12} P_{12} S^{ii,a}_2 ) - (1 - \delta_{ij}) \sum_{a=1}^n \text{tr}_2 ( R^{z-z_a}_{12} (q_{ij}) P_{12} S^{ij,a}_2 )\,. }
\end{array}\end{equation}
Define also the set of linear operators (the extensions of (\ref{inerten1})-(\ref{inerten4})):
\begin{equation}
\label{inertenR1}
    \widetilde{\mathcal{J}}^{\eta, q_{mn}}_a (\mS^{ij,b}) = \text{tr}_2 \Big( \Big( R^{z_{ab}}_{12} (q_{mn} + \eta) - R^{z_{ab}}_{12} (q_{mn}) \Big) P_{12} \mS^{ij,b}_2 \Big)\,,
\end{equation}
\begin{equation}
\label{inertenR2}
    \widetilde{\mathcal{J}}^{\eta}_a (\mS^{ij,b}) = \text{tr}_2 \Big( \Big( R^{z_{ab}}_{12} (\eta) - R^{z_{ab}, (0)}_{12} \Big) P_{12} \mS^{ij,b}_2 \Big)\,,
\end{equation}
\begin{equation}
\label{inertenR3}
    \mathcal{J}^{\eta, q_{mn}}(\mS^{ij,b}) = \text{tr}_2 \Big( \Big( R^{q_{mn} + \eta, (0)}_{12}  - R^{q_{mn}, (0)}_{12} \Big) \mS^{ij,b}_2 \Big)\,,
\end{equation}
\begin{equation}
\label{inertenR4}
    \mathcal{J}^\eta(\mS^{ij,b}) = \text{tr}_2 \Big( \Big( R^{\eta, (0)}_{12}  - r^{(0)}_{12} \Big) \mS^{ij,b}_2 \Big)\,.
\end{equation}
The goal of this Section is to derive equations of motion. For the non-diagonal blocks ($i\neq j$) of $\mS$ matrix we have:
\begin{equation}\begin{array}{c}
\label{eqmRij}
    \displaystyle{\Dot{ \mS }^{ij,a} = \mS^{ij,a} \mathcal{J}^\eta (\mS^{jj, a}) - \mathcal{J}^\eta ( \mS^{ii,a} ) \mS^{ij,a} + \sum_{k:k\neq j}^M \mS^{ik,a} \mathcal{J}^{\eta, q_{kj}} (\mS^{kj,a}) - \sum_{k:k\neq i}^M \mathcal{J}^{\eta, q_{ik}} (\mS^{ik,a}) \mS^{kj,a} + } \\ \ \\
    \displaystyle{ + \sum_{b: b\neq a}^n \Big(\mS^{ij,a} \widetilde{\mathcal{J}}^\eta_a (\mS^{jj,b}) - \widetilde{\mathcal{J} }^{\eta}_a ( \mS^{ii,b} ) \mS^{ij,a} \Big) + \sum_{b: b\neq a}^n \Big( \sum_{k:k\neq j}^M \mS^{ik,a} \widetilde{\mathcal{J}}^{\eta, q_{kj}}_a (\mS^{kj, b}) - \sum_{k:k\neq i}^M \widetilde{\mathcal{J}}^{\eta, q_{ik}}_a (\mS^{ik, b}) \mS^{kj,a} ) \Big)\,. }
\end{array}\end{equation}
For the diagonal blocks equations are as follows:
\begin{equation}\begin{array}{c}
\label{eqmRii}
    \displaystyle{\Dot{ \mS }^{ii,a} = [\mS^{ii,a}, \mathcal{J}^\eta (\mS^{ii, a})] + \sum_{k: k\neq i}^M \Big( \mS^{ik,a} \mathcal{J}^{\eta, q_{ki}} (\mS^{ki,a}) - \mathcal{J}^{\eta, q_{ik}} (\mS^{ik,a}) \mS^{ki,a} \Big) + } \\ \ \\
    \displaystyle{ + \sum_{b: b\neq a}^n [\mS^{ii,a}, \widetilde{\mathcal{J}}^\eta_a (\mS^{ii,b})] + \sum_{b: b\neq a}^n \sum_{k: k\neq i}^M \Big( \mS^{ik,a} \widetilde{\mathcal{J}}^{\eta, q_{ki}}_a (\mS^{ki, b}) - \widetilde{\mathcal{J}}^{\eta, q_{ik}}_a (\mS^{ik, b}) \mS^{ki,a} ) \Big). }
\end{array}\end{equation}

\begin{theorem}
The equations of motion \eqref{eqmRij} and \eqref{eqmRii} are equivalent to the Lax equation with additional term:
\begin{equation}
\label{nonLaxR}
    \frac{d}{dt}\, \mL(z) = [ \mL(z), \mM(z) ] + \sum_{i, j=1}^M \sum_{a=1}^n \text{tr}_2 \Big( (\mu_i - \mu_j) F_{12}^{z-z_a} ( q_{ij} + \eta ) P_{12} \: \mS^{ij,a}_2 \Big)
\end{equation}
for matrices \eqref{relLaxl} and \eqref{relLaxm}, where:
\begin{equation}
\label{LaxCond}
    \mu_i = \Dot{q}_i - N \sum_{a=1}^n \mS^{ii,a}_{0,0}=
    \Dot{q}_i - \sum_{a=1}^n \tr(\mS^{ii,a})\,, \quad i=1, \ldots, M\,.
\end{equation}
\end{theorem}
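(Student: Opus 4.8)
The plan is to follow the proof of Theorem \ref{th1} step by step, replacing the explicit elliptic-function identities used there (the Fay identity together with (\ref{a16})--(\ref{a17})) by the purely $R$-matrix identities that follow from the associative Yang--Baxter equation (\ref{AYB}) and are collected in Appendix B. As in the elliptic computation, I would verify (\ref{nonLaxR}) block by block, treating the diagonal blocks $\mL^{ii}$ and the off-diagonal blocks $\mL^{ij}$ ($i\neq j$) separately. The left-hand side $\frac{d}{dt}\mL^{ij}(z)$ produces, besides the $\Dot{\mS}^{ij,a}$ contribution, a term proportional to $\Dot q_{ij}$ coming from the dependence of the argument $q_{ij}+\eta$ of the $R$-matrix on the coordinates; this term equals $\Dot q_{ij}\sum_a\tr_2(F^{z-z_a}_{12}(q_{ij}+\eta)P_{12}\mS^{ij,a}_2)$, where $F^{u}_{12}$ is the derivative of $R^u_{12}$ in its argument, and it is the $R$-matrix analogue of the $f_\gamma$-term in (\ref{w61}).

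Next I would evaluate the commutator $[\mL(z),\mM(z)]$ block by block. Each product of two blocks, both of the form $\tr_2(R\,P_{12}\mS_2)$, is rewritten as a single trace over two auxiliary spaces, $\tr_{23}(\cdots\mS^{ik,a}_2\mS^{kj,b}_3)$, so that the products of $R$-matrices carry superscripts $z-z_a$ and $z-z_b$. The key step, replacing the use of the Fay identity, is to apply (\ref{AYB}) to these products: since $(z-z_a)-(z-z_b)=z_{ba}$ does not depend on the running spectral parameter $z$, the equation splits each product into a factor carrying the full $z$-dependence (an $R$-matrix with superscript $z-z_a$ or $z-z_b$, matching the spectral structure of the left-hand side) times a factor with superscript $z_{ab}$, the sign being fixed by the skew-symmetry (\ref{w031}). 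After the partial traces are taken, the $z_{ab}$-factors assemble exactly into the operators $\widetilde{\mathcal{J}}^{\eta,q_{mn}}_a$ and $\widetilde{\mathcal{J}}^{\eta}_a$ of (\ref{inertenR1})--(\ref{inertenR2}), while the pieces with coincident poles produce, through the zeroth-order coefficients $R^{\cdot,(0)}_{12}$ and $r^{(0)}_{12}$ of (\ref{serRx}), (\ref{CYB2}), the operators $\mathcal{J}^{\eta,q_{mn}}$ and $\mathcal{J}^{\eta}$ of (\ref{inertenR3})--(\ref{inertenR4}).

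As in Theorem \ref{th1}, I would then split every sum into the diagonal-in-pole part $a=b$ and the off-diagonal part $a\neq b$. The $a=b$ contributions coincide with the single-pole $\mathrm{GL}_{NM}$ model (box 4 of Scheme 2), whose $R$-matrix Lax equation was established in \cite{Reltops,ReltopsR} and, for the one-site top, in \cite{LOZ2,KZ19}; they reproduce the first lines of (\ref{eqmRij}) and (\ref{eqmRii}), built from $\mathcal{J}^{\eta,q_{mn}}$ and $\mathcal{J}^{\eta}$. The $a\neq b$ contributions give the remaining $\widetilde{\mathcal{J}}$-dependent lines. Finally the $\mu$-term is disposed of by (\ref{LaxCond}): writing $\Dot q_{ij}=(\mu_i-\mu_j)+N\sum_a(\mS^{ii,a}_{0,0}-\mS^{jj,a}_{0,0})$, the $\mu_i-\mu_j$ part is transferred to the right-hand side as the additional $F$-term of (\ref{nonLaxR}), while the remaining scalar part cancels against the contributions of the scalar components $\mS^{ii,a}_{0,0}$, $\mS^{jj,a}_{0,0}$ coming from $\mM^{ii}$ (now encoded by the degenerate $R$-matrix data in place of the $E_1$- and $\phi$-constants of the elliptic proof). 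I expect the main obstacle to be the bookkeeping of the auxiliary tensor spaces when resolving the traced products by (\ref{AYB}), together with the careful matching of the zeroth-order expansion coefficients; the unitarity (\ref{unitarity}) and skew-symmetry (\ref{w031}) of the $R$-matrix will be needed precisely where the elliptic argument used the symmetry of $\kappa_{\gamma,\beta}$ and of the $\varphi$-functions to symmetrize the diagonal blocks and antisymmetrize the off-diagonal ones.
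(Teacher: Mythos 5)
Your proposal follows essentially the same route as the paper's own proof: block-by-block verification with the $a=b$ contributions delegated to \cite{Reltops,ReltopsR}, the $a\neq b$ terms resolved by the associative Yang--Baxter equation and its degenerations \eqref{fla1}--\eqref{fla2} so that the $z_{ab}$-factors assemble into $\widetilde{\mathcal{J}}$ and the coincident-pole pieces into $\mathcal{J}$ via the coefficients $R^{\cdot,(0)}_{12}$, $r^{(0)}_{12}$, and the $F$-term plus constraint \eqref{LaxCond} disposing of the $\dot q_{ij}$ and scalar-component contributions. This matches the paper's argument in both structure and key identities, so the plan is sound.
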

\begin{proof} The structure of the proof is similar to the proof of Theorem \ref{th1}. Consider the l.h.s.
of (\ref{nonLaxR}):
\begin{equation}
\label{leftLaxR}
    \frac{d \mathcal{L}}{dt} = \sum_{i, j} E_{ij} \otimes \sum_{a=1}^n \text{tr}_2 \Big( R^{z-z_a}_{12} (q_{ij} + \eta) P_{12} \Dot{\mathcal{S}}^{ij, a}_2 + F^{z-z_a}_{12} (q_{ij} + \eta) P_{12} \mathcal{S}^{ij, a}_2 \Dot{q}_{ij} \Big)\,.
\end{equation}
For the r.h.s. of (\ref{nonLaxR}) we consider the diagonal and non-diagonal  blocks separately.
The diagonal part is as follows:
\begin{equation}\label{w70} \begin{array}{c}
    \displaystyle{
    [ \mathcal{L}, \mathcal{M} ]^{ii} = \sum_{a, b=1}^n \text{tr}_{23} \Big( \Big(R^{z-z_a, (0)}_{12} R^{z-z_b}_{23} (\eta) - R^{z-z_a}_{12} (\eta) R^{z-z_b, (0)}_{23} \Big) P_{12} P_{13} \mathcal{S}^{ii, a}_2 \mathcal{S}^{ii, b}_3 \Big) +
    }
    \\ \ \\
    \displaystyle{
    + \sum_{a, b=1}^n \sum_{k: k\neq i}^M \text{tr}_{23} \Big( \Big(R^{z-z_a}_{12} (q_{ik}) R^{z-z_b}_{23} (q_{ki} + \eta) - R^{z-z_a}_{12} (q_{ik} + \eta) R^{z-z_b}_{23} (q_{ki}) \Big) P_{23} P_{12} \mathcal{S}^{ik, a}_2 \mathcal{S}^{ki, b}_3 \Big)\,.
    }
\end{array} \end{equation}
The calculations for the terms with $a=b$ are performed in the same way as was described in \cite{ReltopsR}. So that, here we consider the case $a \neq b$ only.
Using \eqref{fla1} and \eqref{fla2} we get the following expression for the upper line in the r.h.s. of (\ref{w70}):
\begin{equation} \begin{array}{c}
    \displaystyle{
    \sum_{a, b: a\neq b}^n \text{tr}_{23} \Big( R^{z-z_a}_{12} (\eta) P_{12} \Big( R^{z_{ab}, (0)}_{23} - R^{z_{ab}}_{23} (\eta) \Big) P_{23} \mathcal{S}^{ii, a}_2 \mathcal{S}^{ii, b}_3 \Big) +
     }
     \\ \ \\
     \displaystyle{ + \sum_{a, b: a\neq b}^n \text{tr}_{23} \Big( R^{z-z_a}_{12} (\eta) P_{12} \mathcal{S}^{ii, a}_2 \Big( R^{z_{ab}}_{23} (\eta) - R^{z_{ab}, (0)}_{23} \Big) P_{23} \mathcal{S}^{ii, b}_3 \Big)\,.
     }
\end{array} \end{equation}
Similarly, for the lower line in the r.h.s. of (\ref{w70}) one obtains
\begin{equation} \begin{array}{c}
    \displaystyle{ \sum_{a, b: a\neq b}^n \sum_{k:k\neq i}^M \text{tr}_{23} \Big( R^{z-z_a}_{12} (\eta) P_{12} \Big( R^{z_{ab}}_{23} (q_{ik}) - R^{z_{ab}}_{23} (q_{ik} + \eta) \Big) P_{23} \mathcal{S}^{ki, a}_2 \mathcal{S}^{ik, b}_3 \Big) + }
    \\ \ \\
    \displaystyle{ + \sum_{a, b: a\neq b}^n \sum_{k:k\neq i}^M \text{tr}_{23} \Big( R^{z-z_a}_{12} (\eta) P_{12} \mathcal{S}^{ik, a}_2 \Big( R^{z_{ab}}_{23} (q_{ki} + \eta) - R^{z_{ab}}_{23}(q_{ki}) \Big) P_{23} \mathcal{S}^{ki, b}_3 \Big)\,.
    }
\end{array} \end{equation}
Comparing the resulting expressions with the left side of the Lax equation \eqref{leftLaxR} and taking into account $a=b$ case, we get precisely the equation of motion for the diagonal blocks \eqref{eqmRii}.

Consider now the non-diagonal blocks in the r.h.s. of (\ref{nonLaxR}):
\begin{equation} \begin{array}{c} \label{nondiagLaxR}
    \displaystyle{[ \mathcal{L}, \mathcal{M} ]^{ij} = \sum_{a, b=1}^n \text{tr}_{23} \Big( \Big( R^{z-z_b, (0)}_{12} R^{z-z_a}_{23} (q_{ij} + \eta) - R^{z-z_b}_{12} (\eta) R^{z-z_a}_{23} (q_{ij}) \Big) P_{12} P_{13} \mathcal{S}^{ii, b}_2 \mathcal{S}^{ij, a}_3 \Big) + }\\ \ \\
    \displaystyle{ + \sum_{a, b=1}^n \text{tr}_{23} \Big( \Big( R^{z-z_a}_{12} (q_{ij}) R^{z-z_b}_{23} (\eta) - R^{z-z_a}_{12} (q_{ij} + \eta) R^{z-z_b, (0)}_{23} \Big) P_{12} P_{13} \mathcal{S}^{ij, a}_2 \mathcal{S}^{jj, b}_3 \Big) + } \\ \ \\
    \displaystyle{ + \sum_{a, b=1}^n \sum_{k:k\neq i,j}^M \text{tr}_{23} \Big( R^{z-z_a}_{12} (q_{ik}) R^{z-z_b}_{23} (q_{kj} + \eta) P_{12} P_{13} \mathcal{S}^{ik, a}_2 \mathcal{S}^{kj, b}_3 \Big)  - } \\ \ \\
    \displaystyle{ -  \sum_{a, b=1}^n \sum_{k:k\neq i,j}^M \text{tr}_{23} \Big( R^{z-z_a}_{12} (q_{ik} + \eta) R^{z-z_b}_{23} (q_{kj}) P_{12} P_{13} \mathcal{S}^{ik, a}_2 \mathcal{S}^{kj, b}_3 \Big).}
\end{array} \end{equation}
Again, we focus on the terms with $a\neq b$. The first two upper lines in (\ref{nondiagLaxR})
are transformed using \eqref{AYB} and \eqref{fla1}-\eqref{fla2}:
\begin{equation} \begin{array}{c}
    \displaystyle{ \sum_{a, b: a\neq b}^n \text{tr}_{23} \Big( R^{z-z_a}_{12} (q_{ij} + \eta) P_{12} \mathcal{S}^{ij, a}_2 \Big( R^{z_{ab}}_{23} (\eta) - R^{z_{ab},(0)}_{23} \Big) P_{23} \mathcal{S}^{jj, b}_3 \Big) - }
     \\ \ \\
    \displaystyle{ - \sum_{a, b: a\neq b}^n \text{tr}_{23} \Big( R^{z-z_a}_{12} (q_{ij} + \eta) P_{12} \Big( R^{z_{ab}}_{23} (\eta) - R^{z_{ab}, (0)}_{23} \Big) P_{23} \mathcal{S}^{ii, b}_3 \mathcal{S}^{ij, a}_2 \Big) + } \\ \ \\
    \displaystyle{ + \sum_{a, b: a\neq b}^n \text{tr}_{23} \Big( R^{z-z_a}_{12} (q_{ij} + \eta) P_{12} \mathcal{S}^{ii, a}_2 \Big( R^{z_{ab}}_{23} (q_{ij} + \eta) - R^{z_{ab}}_{23} (q_{ij}) \Big) P_{23} \mathcal{S}^{ij, b}_3 \Big) - }
     \\ \ \\
    \displaystyle{ - \sum_{a, b: a\neq b}^n \text{tr}_{23} \Big( R^{z-z_a}_{12} (q_{ij} + \eta) P_{12} \Big( R^{z_{ab}}_{23} (q_{ij} + \eta) - R^{z_{ab}}_{23} (q_{ij}) \Big) P_{23} \mathcal{S}^{ij, b}_3 \mathcal{S}^{jj, a}_2 \Big) + }
     \\ \ \\
    \displaystyle{ + \sum_{a, b: a\neq b}^n N \text{tr}_{2} \Big( F^{z-z_a}_{12} (q_{ij} + \eta) P_{12} (\mathcal{S}^{ii, b}_{0,0} - \mathcal{S}^{jj, b}_{0,0}) \mathcal{S}^{ij, a}_2 \Big)\,.}
\end{array} \end{equation}
Two lower lines in (\ref{nondiagLaxR})  are transformed through \eqref{AYB}. This yields:
\begin{equation} \begin{array}{c}
    \displaystyle{ \sum_{a, b: a\neq b}^n \sum_{k:k\neq i,j}^M \text{tr}_{23} \Big( R^{z-z_a}_{12} (q_{ij} + \eta) P_{12} \mathcal{S}^{ik, a}_2 \Big( R^{z_{ab}}_{23} (q_{kj} + \eta) - R^{z_{ab}}_{23} (q_{kj}) \Big) P_{23} \mathcal{S}^{kj, b}_3 \Big) - }
     \\ \ \\
    \displaystyle{ - \sum_{a, b: a\neq b}^n \sum_{k:k\neq i,j}^M \text{tr}_{23} \Big( R^{z-z_a}_{12} (q_{ij} + \eta) P_{12} \Big( R^{z_{ab}}_{23} (q_{ik} + \eta) - R^{z_{ab}}_{23} (q_{ik}) \Big) P_{23} \mathcal{S}^{ik, b}_3 \mathcal{S}^{kj, a}_2 \Big)\,.}
\end{array} \end{equation}
Plugging this into \eqref{nonLaxR} and taking into account \eqref{LaxCond} we get  \eqref{eqmRij}.
\end{proof}

On the constraints $\mu_i=0$ the additional term in the r.h.s. of (\ref{nonLaxR}) vanishes, and we get the Lax equations. At the same time on the constraints $\mu_i=0$ we may easily deduce equations of motion for positions of particles:
\beq\label{w71}
\begin{array}{c}
    \displaystyle{
    {\ddot q}_i=\sum\limits_{a=1}^n\tr\Big(\Dot{ \mS }^{ii,a}\Big) =  \sum\limits_{a=1}^n\sum_{k: k\neq i}^M \tr\Big( \mS^{ik,a} \mathcal{J}^{\eta, q_{ki}} (\mS^{ki,a}) - \mathcal{J}^{\eta, q_{ik}} (\mS^{ik,a}) \mS^{ki,a} \Big) +
    }
     \\ \ \\
    \displaystyle{
     + \sum_{a,b: b\neq a}^n \tr\sum_{k: k\neq i}^M \Big( \mS^{ik,a} \widetilde{\mathcal{J}}^{\eta, q_{ki}}_a (\mS^{ki, b}) - \widetilde{\mathcal{J}}^{\eta, q_{ik}}_a (\mS^{ik, b}) \mS^{ki,a} ) \Big)\,.
     }
\end{array}
\eq

In order to reproduce the elliptic case one should use the elliptic $R$-matrix (\ref{BB}). Plugging (\ref{BB}) into the expressions \eqref{inertenR1}-\eqref{inertenR4}, we get (here we use identities \eqref{sum1}-\eqref{sum4}):
\begin{equation} \begin{array}{c}
    \displaystyle{ \widetilde{\mathcal{J}}^{\eta, q_{mn}}_a (\mS^{ij,b}) \to \widetilde{J}^{\eta, q_{mn}}_a (\mS^{ij,b}),}
     \\ \ \\
    \displaystyle{ \widetilde{\mathcal{J}}^{\eta}_a (\mS^{ij,b}) \to \widetilde{J}^{\eta}_a (\mS^{ij,b}) + \mathcal{S}^{ij,b}_{0,0} \phi(z_{ab}, \frac{\eta}{N}) 1_N - \mathcal{S}^{ij,b}_{0,0} E_1 (z_{ab}) 1_N,}
    \\ \ \\
    \displaystyle{\mathcal{J}^{\eta, q_{mn}}(\mS^{ij,b}) \to J^{\eta, q_{mn}}(\mS^{ij,b}),}
     \\ \ \\
    \displaystyle{\mathcal{J}^\eta(\mS^{ij,b}) \to J^\eta(\mS^{ij,b}) + \mathcal{S}^{ij,b}_{0,0} E_1 (\frac{\eta}{N}) 1_N .}
\end{array}\end{equation}
In this way equations of motion in the $R$-matrix description \eqref{eqmRij}-\eqref{eqmRii} turn into equations \eqref{eqij}-\eqref{eqii}.

\section{Appendix A: elliptic functions}\label{sec:A}
\def\theequation{A.\arabic{equation}}
\setcounter{equation}{0}

Using the theta-function
\beq\begin{array}{c} \displaystyle{
     \vartheta (z)=\vartheta (z|\tau) = -\sum_{k\in \mathbb{Z}} \exp \left( \pi i \tau (k + \frac{1}{2})^2 + 2\pi i (z + \frac{1}{2}) (k + \frac{1}{2}) \right)\,,\quad {\rm}Im(\tau)>0
}\end{array}\eq
define
the  Kronecker elliptic function
\beq\displaystyle{
\label{philimits}
    \phi(z, u) =
            \frac{\vartheta'(0) \vartheta (z + u)}{\vartheta (z) \vartheta (u)}\,,
            \qquad \res\limits_{z=0}\phi(z,u)=1\,.
}\eq
It has the properties:
\beq\begin{array}{c} \displaystyle{
    \phi(z, u) = \phi(u, z), \quad \phi (-z, -u) = - \phi(z, u).
}\end{array}\eq
Its derivative $f(z,u) = \partial_u \vf(z,u)$ is given by
\beq\begin{array}{c} \displaystyle{
\label{derphi}
    f(z, u) = \phi(z, u)(E_1(z + u) - E_1(u)), \quad f(-z, -u) = f(z, u)\,,
}\end{array}\eq
where\footnote{Functions $E_1$ and $E_2$ are called the Eisenstein functions (the first and the second respectively).}
\beq\begin{array}{c} \displaystyle{
\label{Elimits}
    E_1(z) =
            \partial_z \ln \vartheta(z)\,,
         \quad E_2(z) = - \partial_z E_1(z) = \wp(z) - \frac{\vartheta'''(0) }{3\vartheta'(0)}\,,
}\end{array}\eq
\beq\begin{array}{c} \displaystyle{
    E_1(- z) = E_1(z)\,, \quad E_2(-z) = E_2(z)\,, \quad f(0, u) = - E_2(u)\,.
}\end{array}\eq
 The local expansions near $z=0$ is as follows:
\beq\begin{array}{c} \displaystyle{
\label{serphi}
    \phi(z, u) = \frac{1}{z} + E_1 (u) + z\rho (u) + O(z^2)\,,\quad \rho (z) = \frac{E^2_1(z) - \wp(z)}{2}\,,
}\end{array}\eq
\beq\begin{array}{c} \displaystyle{
\label{serE}
    E_1(z) = \frac{1}{z} + \frac{z}{3} \frac{\vartheta'''(0) }{\vartheta'(0)} + O(z^3)\,.
}\end{array}\eq
The following quasi-periodic properties (on the lattice of periods $1$ and $\tau$) hold:
\beq\begin{array}{c}
\label{percond}
    \displaystyle{ E_1 (z + 1) = E_1 (z), \quad E_1(z + \tau) = E_1(z) - 2\pi i, } \\ \ \\
    \displaystyle{ E_2 (z + 1) = E_2 (z), \quad E_2(z + \tau) = E_2(z),} \\ \ \\
    \displaystyle{ \phi (z + 1, u) = \phi (z, u), \quad \phi(z + \tau, u) = e^{- 2\pi i u} \phi(z, u), } \\ \ \\
    \displaystyle{ f (z + 1, u) = f (z, u), \quad f (z + \tau, u) = e^{-2\pi i u}(f(z, u) - 2\pi i \phi (z, u)).}
\end{array}\eq
 In calculations we use the addition formula
%
\beq \begin{array}{c} \label{Fay} \displaystyle{
    \phi(z_1, u_1) \phi(z_2, u_2) = \phi(z_1, u_1 + u_2) \phi(z_2 - z_1, u_2) + \phi(z_2, u_1 + u_2) \phi(z_1 - z_2, u_1)
}\end{array}\eq
and its degenerations:
\beq\label{derdif}\begin{array}{c} \displaystyle{
    f(z_1, u_1) \phi(z_2, u_2) - \phi(z_1, u_1) f(z_2, u_2) = \phi(z_2, u_1 + u_2) f(z_{12}, u_1) - \phi(z_1, u_1 + u_2) f(z_{21}, u_2),
}\end{array}\eq
\beq\label{a974}\begin{array}{c} \displaystyle{
    f(z, u_1) \phi(z, u_2) - \phi(z, u_1) f(z, u_2) = \phi(z, u_1 + u_2) (E_2(u_2) - E_2 (u_1)),
}\end{array}\eq
\beq\begin{array}{c} \displaystyle{
\label{diffsign}
    \phi(z, u) \phi(z, -u) = E_2(z) - E_2(u) = \wp (z) - \wp (u),
}\end{array}\eq
\beq\label{a16}\begin{array}{c} \displaystyle{
    \phi(z, u_1) \phi(z, u_2) = \phi(z, u_1 + u_2) (E_1 (z) + E_1 (u_1) + E_1 (u_2) - E_1 (z + u_1 + u_2)),
}\end{array}\eq
\beq\label{a17}\begin{array}{c} \displaystyle{
    \phi(z_1, u) \phi(z_2, u) = \phi(z_1 + z_2, u) (E_1 (z_1) + E_1 (z_2)) - f(z_1 + z_2, u).
}\end{array}\eq
\beq\begin{array}{c} \displaystyle{
\label{sigma1}
    \phi (z_1, u) \rho(z_2) - E_1 (z_2) f(z_1, u) + \phi(z_2, u) f(z_{12}, u) - \phi(z_1, u) \rho(z_{21}) = \frac{1}{2} \partial_u f(z_1, u) ,
}\end{array}\eq
\beq\begin{array}{c} \displaystyle{
\label{Ep}
    (E_1(u+v) - E_1(u) - E_1(v))^2 = \wp(u+v) +  \wp(u) + \wp(v),
}\end{array}\eq
\beq\begin{array}{c} \displaystyle{
\label{sigma2}
    \phi (z, u) \rho(z) - E_1 (z) f(z, u) - \phi(z, u) \wp(u) = \frac{1}{2} \partial_u f(z, u).
}\end{array}\eq
For $\al = (\al_1, \al_2)\in \mZ_N \times \mZ_N$ define the following set of functions\footnote{The functions (\ref{varphi}) are basis  elements in the space of sections of the ${\rm End}(V)$ for a holomorphic vector bundle $V$ over elliptic curve of degree 1.}:
\beq\begin{array}{c} \displaystyle{
\label{varphi}
 \displaystyle{
    \vf_\al (z, \om_\al + u) = \exp (2 \pi i \frac{\al_2}{N} z) \phi (z, \om_\al + u), \quad \om_\al = \frac{\al_1 + \al_2 \tau}{N},
    }
}\end{array}\eq
\beq\begin{array}{c} \displaystyle{
\label{varf}
 \displaystyle{
    f_\al (z, \om_\al + u) = \exp (2 \pi i \frac{\al_2}{N} z) f (z, \om_\al + u),
    }
}\end{array}\eq
\beq\begin{array}{c} \displaystyle{
    \label{formulaf}
     \displaystyle{
    f_\al (z, \om_\al + u) = \partial_u \vf_\al (z, \om_\al + u) = \vf_\al (z, \om_\al + u) (E_1(z + \om_\al + u) - E_1(\om_\al + u)).}
}\end{array}\eq
The addition formulae are of the form:
\beq\begin{array}{c}
\label{formulaPhi}
 \displaystyle{
        \vf_\al (z_1, \om_\al + u_1) \vf_\be (z_2, \om_\be + u_2) = \vf_\al (z_1 - z_2, \om_\al + u_1) \vf_{\al + \be} (z_2, \om_{\al+\be} + u_1 + u_2) +}
         \\ \ \\
          \displaystyle{
        + \vf_\be (z_2 - z_1, \om_\be + u_1) \vf_{\al + \be} (z_1, \om_{\al + \be} + u_1 + u_2)\,.
        }
\end{array}\eq
 In particular,
\beq\begin{array}{c}
\label{sinfei}
 \displaystyle{
    \vf_\al(z-z_a, \om_\al) \vf_\be(z-z_b, \om_\be) =
  }
  \\ \ \\
   \displaystyle{
    =\vf_\al(z_{ba}, \om_\al) \vf_{\al + \be}(z - z_b, \om_{\al + \be}) + \vf_\be(z_{ab}, \om_\be) \vf_{\al + \be}(z-z_a, \om_{\al + \be})
    },
\end{array}\eq
and
\beq\begin{array}{c}
\label{formulaE}
\displaystyle{
        \vf_\al (z, \om_\al + u_1) \vf_\be (z, \om_\be + u_2) = \vf_{\al + \be}(z, \om_{\al + \be} + u_1 + u_2)  \times
         }
         \\ \ \\
         \displaystyle{
         \times\Big( E_1(z)+ E_1(\om_\al + u_1) + E_1(\om_\be + u_2) - E_1(z + \om_{\al + \be} + u_1 + u_2)\Big)\,,
         }
\end{array}\eq
and
\beq\begin{array}{c}
\label{formulaEf}
\displaystyle{
        \vf_\al (z_1, \om_\al + u) \vf_\al (z_2, \om_\al + u) =
        }
        \\ \ \\
         \displaystyle{
         = \vf_{\al}(z_1 + z_2, \om_\al + u)  ( E_1(z_1) + E_1(z_2)) - f_\al (z_1 + z_2, \om_\al + u)\,.
         }
\end{array}\eq
The following identity (the finite Fourier transformation) underlies the Fourier symmetry
(\ref{w33}):
\begin{equation}\begin{array}{c} \label{sum1}
    \displaystyle{\frac{1}{N} \sum_\al \kappa^2_{\al, \be} \varphi_\al (Nz, \om_\al + \frac{x}{N}) = \varphi_\be (x, \om_\be + z), \quad \forall \be \in \mZ_N \times \mZ_N}.
\end{array}\end{equation}
Its degenerations yield the relations (see \cite{Fourier} and the Appendix from \cite{ZZ} for details):
\begin{equation}\begin{array}{c}
    \displaystyle{\frac{1}{N}\, E_1 (z) + \frac{1}{N} \sum_{\al \neq 0} \kappa^2_{\al, \be} \varphi_\al (z, \om_\al) = E_1 (\om_\be + \frac{z}{N}) + 2 \pi i \partial_\tau \om_\be }.
\end{array}\end{equation}
\begin{equation}\begin{array}{c}
    \displaystyle{\frac{1}{N} \sum_{\al \neq 0} \kappa^2_{\al, \be} \Big(E_1 (\om_\al) + 2 \pi i \partial_\tau \om_\al \Big) = E_1(\om_\be) + 2 \pi i \partial_\tau \om_\be, \quad \be \neq 0}.
\end{array}\end{equation}
\begin{equation}\begin{array}{c} \label{sum4}
    \displaystyle{\frac{1}{N} \sum_{\al \neq 0} \Big(E_1 (\om_\al) + 2 \pi i \partial_\tau \om_\al \Big) = 0}.
\end{array}\end{equation}

\section{Appendix B: $R$-matrix properties}\label{sec:B}
\def\theequation{B.\arabic{equation}}
\setcounter{equation}{0}

\paragraph{The Baxter-Belavin elliptic $R$-matrix} \cite{Baxter2}:
For the elliptic $R$-matrix the following {\em matrix basis} in $\Mat$ is useful:
\beq\label{a971}\begin{array}{c} \displaystyle{
        T_\al = \exp \left( \al_1 \al_2 \frac{\pi i}{N} \right) Q^{\al_1} \Lambda^{\al_2}, \quad \al = (\al_1, \al_2)\in \mZ_N \times \mZ_N, \quad T_0=T_{(0,0)} = 1_N,
}\end{array}\eq
where $Q\in{\rm Mat}_N$ and $\Lambda\in{\rm Mat}_N$ are matrix generators of non-commutative torus (the finite-dimensional representation of the Heisenberg group):
\beq\label{a972}\begin{array}{c} \displaystyle{
    Q_{jk} = \de_{jk} \exp \left( \frac{2\pi i}{N} k\right), \quad \Lambda_{jk} = \de_{j-k+1 = 0\ \hbox{mod} N}, \quad Q^N = \Lambda^N =1_N\,.
}\end{array}\eq
 The {\em commutation relations} take the form:
 \beq\label{Tcond0}
 \begin{array}{c}
  \displaystyle{
    \exp \left( \al_1 \al_2 \frac{2\pi i}{N} \right) Q^{\al_1} \Lambda^{\al_2}=\Lambda^{\al_2} Q^{\al_1}\,,
}
\end{array}\eq
\beq\label{Tcond}\begin{array}{c} \displaystyle{
    T_\al T_\be = \ka_{\al, \be} T_{\al + \be}, \quad \ka_{\al, \be} = \exp \left( \frac{\pi i}{N}(\al_2 \be_1 - \al_1 \be_2) \right), \quad \ka_{\al, \al + \be} = \ka_{\al, \be}, \quad \ka_{- \al, \be} = \ka_{\be, \al},
}\end{array}\eq
\beq\label{TrT}\begin{array}{c} \displaystyle{
    \hbox{tr} (T_\al T_\be) = N \de_{\al + \be}, \quad \de_\al = \de_{\al_1, 0} \de_{\al_2, 0},
}\end{array}\eq
\beq\label{braketsT}\begin{array}{c} \displaystyle{
    [T_\al, T_\be] = (\ka_{\al, \be} - \ka_{\be, \al}) T_{\al+ \be} = 2i \sin \left( \frac{\pi}{N}(\al_1 \be_2 - \al_2 \be_1) \right) T_{\al + \be}.
}\end{array}\eq
The elliptic Baxter-Belavin $R$-matrix is of the form:
\begin{equation}\label{BB}
\begin{array}{c}
    \displaystyle{R^{z}_{12} (x) = \frac{1}{N} \sum_\al \varphi_\al (x, \frac{z}{N} + \om_\al) T_\al \otimes T_{-\al}}\,.
\end{array}\end{equation}
The $R$-matrix has the following local expansion near $z = 0$ (it is the classical limit since $z$ here plays the role of the Planck constant):
\begin{equation}\begin{array}{c}
\label{serRz}
\displaystyle{
    R^{z}_{12}(x) = \frac{1}{z}\, 1_N \otimes 1_N + r_{12}(x) + z \: m_{12} (x) + O(z^2),
    }
\end{array}\end{equation}
where $r_{12}$ is the classical $r$-matrix satisfying the {\em classical Yang-Baxter equation}:
\begin{equation}\begin{array}{c}
\label{CYB}
    \displaystyle{[r_{12}, r_{13}] + [r_{12}, r_{23}] + [r_{13}, r_{23}] = 0, \quad r_{ij} = r_{ij}(q_i-q_j)\,.
    }
\end{array}\end{equation}
Plugging \eqref{serphi} into \eqref{serRz} one gets explicit expressions for classical $r$-matrix and the next order term (the $m$-matrix):
\begin{equation}
\label{w340}
\begin{array}{c}
    \displaystyle{
    r_{12} (x) = \frac{1}{N}\, E_1(x)\, 1_N \otimes 1_N + \frac{1}{N} \sum_{\al \neq 0} \varphi_\al (x, \om_\al) T_\al \otimes T_{-\al}
    },
\end{array}\end{equation}
\beq\label{w350}\begin{array}{c}
    \displaystyle{ m_{12} (x) = \frac{1}{N^2}\, \rho(x)\, 1_N \otimes 1_N + \frac{1}{N^2} \sum_{\al \neq 0} f_\al (x, \om_\al) T_\al \otimes T_{-\al}}\,.
\end{array}\eq
The local expansion of the classical $r$-matrix has the form:
\begin{equation}\begin{array}{c}
\label{CYB2}
 \displaystyle{
    r_{12}(x) = \frac{1}{x}\, P_{12} + r^{(0)}_{12} + x r^{(1)}_{12} (x) + O(x^2),
    }
\end{array}\end{equation}
where in the elliptic case
\begin{equation}\begin{array}{c}
    \displaystyle{ r^{(0)}_{12} = \frac{1}{N} \sum_{\al \neq 0} (E_1 (\om_\al) + 2 \pi i \partial_\tau \om_\al ) T_\al \otimes T_{-\al}}.
\end{array}\end{equation}
In (\ref{CYB2}) the matrix permutation operator $P_{12}$ appears. It is as follows:
\begin{equation} \begin{array}{c}
\label{perm}
     \displaystyle{
    P_{12}=\sum\limits_{k,l=1}^N E_{kl}\otimes E_{lk}=\frac{1}{N}\sum\limits_{\al\in\,\mZ_N\times\mZ_N}T_\al\otimes T_{-\al}\in\Mat^{\otimes 2}\,.
    }
\end{array} \end{equation}



\subsection*{R-matrix properties and identities}

Here we give some more properties for the elliptic $R$-matrix (\ref{BB}) and its degenerations.

Besides the classical limit (\ref{serRz}), which provides the local expansion of $R^{z}_{12}(x)$ near $z=0$, it is
also useful to consider its expansion near $x=0$:
\begin{equation}\begin{array}{c}
\label{serRx}
\displaystyle{
    R^{z}_{12}(x) = \frac{1}{x}\, P_{12} + R^{z, (0)}_{12} + x R^{z, (1)}_{12} + O(x^2)\,,
    \qquad \res\limits_{x=0}R^{z}_{12}(x) =P_{12}\,,
    }
\end{array}\end{equation}
where $P_{12}$ is the permutation operator (\ref{perm}).

The next, is the Fourier symmetry:
\begin{equation}\label{w33}\begin{array}{c}
    R^z_{12}(x) P_{12} = R^x_{12}(z)\,,
\end{array}\end{equation}
which can be viewed as matrix analogue of the trivial property $\phi(z,u)=\phi(u,z)$.

The skew-symmetry property (\ref{w031}) provides a set of relations for the coefficients of the expansions
(\ref{serRz}) and (\ref{CYB2}):
\begin{equation}\begin{array}{c}
    R^{z}_{12}(x) = - R^{-z}_{21}(-x), \quad r_{12} (z) = -r_{21} (-z), \quad r_{12}^{(0)} = -r_{21}^{(0)}, \quad m_{12}(z) = m_{21} (-z).
\end{array}\end{equation}
Using the symmetry (\ref{w33}) one can also derive a set of relations between the coefficients of
the expansions (\ref{serRz}) and (\ref{serRx}):
 \eqref{serRx}:
\begin{equation}\begin{array}{cc}
    R^{z, (0)}_{12} = r_{12}(z) P_{12}\,, & r_{12}^{(0)} = r_{12}^{(0)} P_{12}\,, \\ \ \\
    R^{z, (1)}_{12} = m_{12}(z) P_{12}\,, & r_{12}^{(1)} = m_{12}^{(0)} P_{12}\,.
\end{array}\end{equation}
The following notation is used for the $R$-matrix derivative with respect to spectral parameter:
\begin{equation}\begin{array}{c}
    F^{z}_{12} (q) = \partial_q R^{z}_{12} (q).
\end{array}\end{equation}
Degenerations of the associative Yang-Baxter equation \eqref{AYB} provide the following set of identities:
\begin{equation}\begin{array}{c} \label{fla1}
    R^{z-z_a}_{12}(x) R^{z-z_b, (0)}_{23} = R^{z-z_b}_{13}(x) R^{z_{ba}}_{12}(x) + R^{z_{ab}, (0)}_{23} R^{z-z_a}_{13}(x) + P_{23} F^{z-z_a}_{13}(x),
\end{array}\end{equation}
\begin{equation}\begin{array}{c} \label{fla2}
    R^{z-z_a,(0)}_{12} R^{z-z_b}_{23} (x) =  R^{z_{ab}}_{23} (x) R^{z-z_a}_{13} (x) + R^{z-z_b}_{13}(x) R^{z_{ba},(0)}_{12} + F^{z-z_b}_{13}(x) P_{12}.
\end{array}\end{equation}
Finally, we assume the following $R$-matrix traces:
\begin{equation}\begin{array}{c} \label{traceR}
    \tr_1 R^z_{12} (x) = \tr_2 R^z_{12} (x) = \phi(z, x) 1_N, \quad \tr_1\; r_{12} (x) = E_1 (x) 1_N, \quad \tr_1 \; m_{12} (x) = \rho(x) 1_N\,.
\end{array}\end{equation}
In the elliptic case the latter simply follows from the definitions (\ref{BB}), (\ref{w340}) and (\ref{w350}).


\subsection*{Acknowledgments}
We are grateful to A. Zabrodin for useful discussions.

This work was supported by the Russian Science Foundation under grant no. 19-11-00062,\\ https://rscf.ru/en/project/19-11-00062/ .


\begin{small}

\end{small}

\end{document}